\numberwithin{equation}{section}
\theoremstyle{plain}
\newtheorem{theorem}{Theorem}
\newtheorem{lemma}[theorem]{Lemma}
\newtheorem{corol}[theorem]{Corollary}
\newtheorem{prop}[theorem]{Proposition}
\theoremstyle{definition}
\newtheorem{example}[theorem]{Example}
\theoremstyle{remark}
\newcommand{\suppress}[1]{}
\newcommand{\Rplus}{\mathbb R_+}
\numberwithin{equation}{section}
\newcommand{\M}{{\text{sd}}}
\newcommand{\be}{\begin{equation}}
\newcommand{\ee}{\end{equation}}
\newcommand{\bea}{\begin{eqnarray}}
\newcommand{\eea}{\end{eqnarray}}
\newcommand{\bean}{\begin{eqnarray*}}
\newcommand{\eean}{\end{eqnarray*}}
\DeclareMathOperator{\support}{support}
\newcommand{\ljs}[1]{{ \textcolor{red}{(LJS:  #1)}}{}}
\title{Allocation of Divisible Goods
under Lexicographic Preferences}
\author{Leonard J. Schulman}
\address{Caltech, MC305-16, Pasadena CA 91125, USA.}
\email{schulman@caltech.edu}
\author{Vijay V. Vazirani}
\address{College of Computing, Georgia Institute of
  Technology, Atlanta GA 30332, USA.}
\email{vazirani@cc.gatech.edu}
\date{\today .
 \textit{Author affiliations:} Caltech, \tt{schulman@caltech.edu}; Georgia Tech, \tt{vazirani@cc.gatech.edu}}
\begin{document}
\pagenumbering{gobble}
\clearpage
\thispagestyle{empty}

\begin{abstract}
We present a simple and natural non-pricing mechanism for
allocating divisible goods among strategic agents having lexicographic
preferences.  
Our mechanism has favorable properties of strategy-proofness
(incentive compatibility). In addition (and even when extended to the case of Leontief bundles) it enjoys Pareto efficiency,
envy-freeness, and time efficiency.
\end{abstract}

\maketitle
\clearpage
\pagenumbering{arabic}
\section{Introduction}

The study of principled ways of allocating divisible goods
among agents has long been a central topic in mathematical
economics. The method of choice that emerged from this
study, the Arrow-Debreu market model~\cite{AD}, provides a
powerful approach based on pricing and leads to the
fundamental welfare theorems. However, these market-based
methods have limitations when agents are assumed to be
strategic, e.g., these methods are not incentive
compatible. Issues of the latter kind have been studied
within the area of mechanism design for the last four
decades, and have played a large role in the last decade in
algorithmic game theory~\cite{nisan}.

In this paper our primary focus is on deriving a non-pricing mechanism for allocating
divisible goods, that satisfies incentive-compatibility, Pareto
optimality and envy-freeness. A natural approach to achieving Pareto optimality and envy-freeness is to start in a greedy fashion
by assigning agents their most favored goods, and gradually moving on to their less favored choices.  
It is easy to come up with several ways of making this approach precise---two are described in Section \ref{counter}---and achieve Pareto optimality and envy-freeness. However, it is not a priori clear that it is possible to also achieve
incentive compatibility, without which a mechanism is of doubtful merit in an environment of strategic agents. In the main contribution of our paper we show that a third version of this approach, 
the \textit{Synchronized
Greedy (SG) mechanism}, achieves all three properties.

The SG mechanism can be seen as generalizing a mechanism introduced by Cr\`{e}s
and Moulin~\cite{CM01}, called {\em Probabilistic Serial (PS)}, in the context of a job scheduling
problem, and studied further by Bogomolnaia and
Moulin~\cite{BM01} for the allocation of indivisible goods\footnote{These
mechanisms for allocation of indivisible goods are randomized. Our
focus on divisible goods is just as general, since an allocation of
divisible goods can be used without further modification as a
randomized allocation of indivisible goods in the same quantities.}.
The preference model assumed by \cite{BM01} was first order stochastic dominance, which we will shorten to {\em sd-preference}.
They showed that in this model, PS is efficient, envy-proof and weakly incentive compatible. Furthermore, they
showed that in this model, no mechanism satisfies all three properties, i.e., efficiency, envy-proofness and incentive compatibility.
In view of the second result, we need to relax the model in order to obtain a mechanism satisfying all three properties; we
do so by resorting to the {\em lexicographic preference relation} and assuming that the goods are divisible.

Lexicographic preferences date back to the work of Hausner \cite{Hausner} and are of interest  to economists for the following
reasons. They yield a total order on the set of all allocations (unlike sd-preferences, say, which only form a partial order)
and they can be seen as a strong-preferences limit of von Neumann-Morgenstern utilities. A preference relation that is complete, 
transitive and satisfies the continuity condition that preferences between allocations are preserved under limits
is known to be representable by a utility function \cite{Collel}. Of these, lexicographic preferences 
forgo continuity.
\suppress{
For the setting defined below, we show that SG has favorable
efficiency, incentive compatibility, and fairness
properties. Our setting assumes that each agent has a
\textit{lexicographic} preference relation over goods.
 We note that
this preference relation is \textit{rational} in the sense that
it is complete and transitive. It does not, on the other
hand, satisfy
the continuity condition that preferences between
allocations are preserved under limits; a rational preference
relation that also satisfies this continuity condition
is known to be representable by a utility function,
see~\cite{Collel}.
}
What favorable properties can be achieved in the area of goods allocation using only non-pricing mechanisms is a difficult question. The present paper can be regarded as carving out a certain special case, namely the limit in which agents have very strong preferences among the goods, and providing strong positive guarantees in this case. 
In this limit there is an additional motivation to use non-pricing mechanisms, because very strong preferences might cause a pricing mechanism to do little more than ensure that the wealthiest agents get what they want. By focusing on non-pricing mechanisms, we can study what game-theoretic properties an allocation mechanism can achieve, without depending on what resources the agents possess or care to invest in the game. 

There are many every-day examples where something like our model comes up---naturally, not in market economy transactions, but in other societal mechanisms for allocation. An important class is allocation of public resources, e.g.,
placement lotteries in public schools, see Kojima~\cite{Kojima09} for further examples and references. (Note also that this kind of example employs a standard reduction of the indivisible goods case to the divisible goods case by randomization.) 

\suppress{
Another class of examples is coordination within teams. To illustrate, suppose Alice, Bob and Carol are camp counselors. They have to divide up many tasks in their 12 daily duty hours: leading canoeing activities, leading rock-climbing, meeting parents of prospective campers, kitchen duty, cleaning latrines, etc. Alice loves to rock-climb, but fears the water and prefers even latrine duty to canoeing. Bob loves canoeing, but will do anything to stay out of the kitchen where he once saw a rat. Carol enjoys cooking even for large groups, but cannot stifle her impatience with visiting parents. These modestly-paid counselors are not going to arrange their shared summer schedule with a monetary exchange. But they might use a protocol like the one we analyze in this article. Due to its strategy-proofness it can be implemented simply and transparently.
}

The recent paper of Saban and Sethuraman~\cite{SS} builds on our work and solves several open problems stated in an earlier version
of this paper \cite{SV}; these results are described at the end of Section~\ref{lit}. 
The broader challenge of the utility-functions version of the allocation problem remains largely open. The simplicity of the SG mechanism is perhaps encouraging toward the existence of allocation mechanisms maintaining favorable (maybe weaker) game-theoretic properties in this setting. Finally, we note that independent of our work, Cho \cite{Cho} has also studied the use of lexicographic preferences 
in the context of probabilistically assigning indivisible objects to agents.

\subsection*{Parameters of the problem}
In the allocation problem there are $m$ distinct divisible goods which need to be allocated
among $n$ agents. Good $j$ ($1 \leq j \leq m$)
is available in the amount $q_j > 0$, and agent $i$ $(1
\leq i \leq n$) is
to receive a specified $r_i > 0$ combined quantity of all goods;
the parameters satisfy $\sum_j q_j \geq \sum_i r_i$, i.e., the total supply is at least as large 
as the total demand. If this inequality fails, our mechanism may still be run after rescaling expectations so that each agent $i$ is to receive the quantity $r'_i = r_i (\sum q_j)/(\sum r_\ell)$. 
So in the sequel we may assume $\sum_j q_j \geq \sum_i r_i$.


\subsection*{Preferences: the non-Leontief case}
The \textit{non-Leontief} case of our problem is this. 
An
\textit{allocation} of goods is a list of
numbers $a_{ij} \geq 0$, with $\sum_j a_{ij}=r_i$ and
$\sum_i a_{ij} \leq q_j$, indicating that agent $i$ receives
quantity $a_{ij}$ of good $j$. The vector
$a_{i*}=(a_{i1},\ldots,a_{im})$ is referred to as agent
$i$'s (share of the) allocation. Each agent $i$ has a
\textit{preference list}, which is a permutation $\pi_i$ of
the goods; $(a_{i\pi_{i}(1)},\ldots,a_{i\pi_{i}(m)})$ is
agent $i$'s \textit{sorted allocation}. Agent $i$'s
preference among allocations is induced by
\textit{lexicographic order}. That is to say, agent $i$
\textit{lexicographic-prefers} $a_{i*}$ to $b_{i*}$, written $a_{i*}
>_i b_{i*}$, if the leftmost nonzero coordinate of
$(a_{i\pi_{i}(1)},\ldots,a_{i\pi_{i}(m)}) -
(b_{i\pi_{i}(1)},\ldots,b_{i\pi_{i}(m)})$ is positive.
Furthermore, we will say that agent $i$
prefers $a_{i*}$ to $b_{i*}$ in the stochastic domination order~\cite{BM01}, or 
\textit{sd-prefers}  $a_{i*}$ to $b_{i*}$, 
written $a_{i*} >^\M_i b_{i*}$,
 if
\[ \mbox{for all} \ k = 1, \ldots, m: \ \ \sum_{\ell=1}^k
a_{i\pi_{i}(\ell)} \geq \sum_{\ell=1}^k b_{i\pi_{i}(\ell)} ,\] with at least
one of the inequalities being strict. 
The symbols $\geq_i$ and  $\not \geq_i$
will have the obvious interpretations.

Since an agent's preferences depend only on his own share of the allocation, we speak interchangeably of an agent's preference for an allocation or an allocation share. In particular, $a_{i*}
>_i b_{i*}$ may be written more simply as $a >_i b$, and
$a_{i*} >^\M_i b_{i*}$ may be written as $a >^\M_i b$.

\subsection*{Preferences: Leontief Bundles}
Some of our results hold in the more general setting of lexicographic preferences among Leontief bundles, and some fail in that setting; details below.
A Leontief bundle is specified by a non-negative vector $\lambda=(\lambda_1,\ldots,\lambda_m) \in \Rplus^m$ (where $\Rplus=$ non-negative reals).
The set of goods $j$ for which $\lambda_j$ is positive is called the {\em support} of this bundle.
(If the set is of size one, we refer to this as a singleton bundle; in Economics this is sometimes also called the linear case.)
 If $q \in \Rplus^m$ then the bundle $\lambda$ may be allocated from $q$ in any quantity $\alpha \in \Rplus$ such that $\alpha \lambda_j \leq q_j$ for all $j$. 
\suppress{
 If $x$ is a 
Let $a_1, \ldots, a_m$ denote the proportions in which $m$ divisible goods are desired and let $x$ be a bundle of goods. Then the utility
of this bundle is defined to be
\[ \min \left\{ {x_1 \over a_1}, \ldots, {x_m \over a_m} \right\} \]
The set of goods $j$ for which $a_j$ is positive is called the {\em support} of this function.
}
In an instance of our problem, a list of $M$ Leontief bundles $\lambda^1,\ldots,\lambda^M$ is specified, including among them the $m$ singleton bundles (hence always $M\geq m$).
\suppress{ (There is little importance to upper bounding $M$, as in any case only few bundles will wind up being used in our allocations. \ljs{Revise this comment.}) }
It is convenient, and in our context sacrifices no generality, to impose the convention that for every bundle $\lambda^k$, $\sum_1^m \lambda^k_j=1$. 

The case $m=M$, in which all bundles are singletons, is of course a special case of the Leontief framework, but to distinguish it from the general situation we call it the ``non-Leontief'' case.

The framework we are concerned with is that
each agent $i$ has a preference list specified by a permutation $\pi_i$ of the bundles. A Leontief allocation is an $n \times M$ matrix $\ell$ in which $\ell_{ik}$ represents the quantity of bundle $k$ allocated to agent $i$. A Leontief allocation $l$ imposes the goods allocation $A(l)$, an $n \times m$ matrix, by $A(l)_{ij}=\sum_{k=1}^M l_{ik} \lambda^k_j$. We further require that a Leontief allocation satisfy the conditions $\sum_j A(l)_{ij}=r_i$ (thanks to the convention above this is equivalent to $\sum_k l_{ik}=r_i$)
and $\sum_i A(l)_{ij}\leq q_j$. We speak of $A(l)_{i*}$ and $l_{i*}$ as agent $i$'s \textit{share} of, respectively, the goods and the Leontief bundles.
The vector
 $(l_{i \pi_i(1)},\ldots,l_{i\pi_i(M)})$ is agent $i$'s \textit{sorted Leontief share}. 
Agent $i$'s
preference among allocations is induced by
\textit{lexicographic order} on his share of the allocation. That is to say, agent $i$
\textit{lexicographic-prefers} $l$ to $l'$, written $l
>_i l'$, if the leftmost nonzero coordinate of
$(l_{i\pi_{i}(1)},\ldots,l_{i\pi_{i}(M)}) -
(l'_{i\pi_{i}(1)},\ldots,l'_{i\pi_{i}(M)})$ is positive. Thus, for any goods allocation $a$, there is a favored Leontief allocation, denoted $L^\pi(a)$, defined by providing each agent with the best Leontief share that can be assembled from his share of the goods---to be explicit, this is obtained by starting with $a_{i*}$ as the available goods vector, and then, for $k$ from $1$ to $M$, setting
$L^\pi(a)_{i\pi_i(k)}$ to be the largest $\alpha$ such that $($(available goods vector)$-\alpha \lambda^k) \in \Rplus^M$, then subtracting $\alpha \lambda^k$ from the available goods vector and iterating.

We say that agent $i$ \textit{sd-prefers} allocation $a$ to $b$, written 
 $a >^\M_i b$,
 if
\[ \mbox{for all} \ K = 1, \ldots, M: \ \
\sum_{k=1}^K L^\pi(a)_{i \pi_i(k)} \geq
\sum_{k=1}^K L^\pi(b)_{i \pi_i(k)} ,
\] with at least
one of the inequalities being strict.  

\subsection*{The two orders}
Observe that
``lexicographic-prefers'' is a complete preference relation
without indifference contours (since it is antisymmetric for
distinct allocation shares), and that
``sd-prefers'' is an incomplete preference
relation; moreover the lexicographic order is a refinement
of the sd order, i.e., sd-prefers
implies lexicographic-prefers. The phrase ``agent $i$
weakly X-prefers'' will be used to include the possibility
that agent $i$'s share is identical in the two allocations.

\subsection{Our results}

The SG mechanism is deterministic, treats all agents
symmetrically, and has the following properties.

\subsubsection*{Properties w.r.t.\ sd preference:} 
\begin{itemize}

\item If all $r_i$'s are equal, the allocation produced by
  the SG mechanism in response to truthful bids is envy-free
  in the following sense: each agent weakly
  sd-prefers his allocation to that of any other
  agent. This holds also in the Leontief case.
\end{itemize}

\subsubsection*{Properties w.r.t.\ lexicographic preference:} 

\mbox{} 

(Since most of our paper deals with the relation
``lexicographic-prefers'', we subsequently abbreviate it to
``prefers''.)
\begin{itemize}
\item The allocation produced by the SG mechanism in response
to truthful bids is Pareto efficient.
This holds also in the Leontief case.
\item \label{ic} Incentive compatibility for a single agent: In the non-Leontief case, the SG
  mechanism is strategy-proof if $\min_j q_j \geq  \max_i r_i$. 

We give counterexamples (a) in the absence of this inequality, (b) for the Leontief case.
\item \label{icg} Generalizing the previous item, we have: Incentive compatibility for a coalition:
The SG mechanism is group strategy-proof
against coalitions of $\ell$ agents if $\min_j q_j \geq
\max_{S: |S|=\ell} \sum_{i \in S} r_i$. 

\item The running time to implement the SG mechanism is
  $\tilde{O}(mn)$ in the non-Leontief case, and $\tilde{O}(n(m^2+M))$ in the Leontief case.

\item
\textit{Any} Pareto efficient allocation can be produced using a suitable 
``variable speeds'' extension of the SG mechanism. This holds also in the Leontief case. (However, the variable speeds extension does not possess the rest of the properties listed above.)
\end{itemize}

The incentive compatibility properties 
are the main results of this paper.

\subsection{Literature}
\label{lit}
There has been considerable work on the strategy-proof
allocation of divisible goods in Arrow-Debreu economies,
starting with the seminal work of Hurwicz \cite{Hurwicz},
e.g., see
\cite{Dasgupta,Kato,Sonnen,Serizawa1,Serizawa2,Zhou91}.
Most of these results are negative, among the recent ones
being Zhou's result showing that in a 2-agent, $n$-good pure
exchange economy, there can be no allocation mechanism that
is efficient, non-dictatorial (i.e., both agents must
receive non-zero allocations) and strategy-proof
\cite{Zhou91}.

The paper that is most closely related to our work is that
of Bogomolnaia and Moulin~\cite{BM01}. In their setting
there are $n$ agents and $n$ indivisible goods, each agent
having a total preference ordering over the goods; the
desired outcome is a matching of goods with agents. A
straightforward mechanism for allocating one good to each
agent is \textit{random priority (RP):} pick a uniformly random
permutation of the agents and ask each agent in turn to
select a good among those left. It is easy to see that this
mechanism is \textit{ex post efficient}, i.e., the
allocation it produces can be represented as a probability
distribution over Pareto efficient deterministic
allocations, and it is strategy-proof. However, it is not ex
ante efficient.
A random allocation is said to \textit{ex ante efficient} if
for any profile of von Neumann-Morgenstern utilities that
are consistent with the preferences of agents, the expected
utility vector is Pareto efficient. It is easy to see that
ex ante efficiency implies ex post efficiency.

Solving a conjecture of Gale~\cite{Gale}, Zhou~\cite{Zhou90}
showed that no strategy-proof mechanism that elicits von
Neumann-Morgenstern utilities and achieves Pareto efficiency
can find a ``fair'' solution even in the weak sense of equal
treatment of equals. He further showed that the solution
found by RP may not be efficient if agents are endowed with
utilities that are consistent with their preferences. Hence,
ex ante efficiency had to be sacrificed, if
strategy-proofness and fairness were desired.

In the face of these choices, the work of Bogomolnaia and
Moulin gave the notion of \textit{ordinal efficiency} that
is intermediate between ex post and ex ante efficiency; an
allocation $a$ is ordinally efficient if there is no other
allocation $b$ such that every agent sd-prefers
$b$ to $a$. They went on to show that the mechanism called
\textit{probabilistic serial (PS)}, introduced in Cr\`{e}s
and Moulin~\cite{CM01}, yields an ordinally efficient
allocation. Further they show that PS is envy-free and
weakly strategy-proof, defined appropriately for the partial
order ``sd-prefers''. Finally, Bogomolnaia and
Moulin define an extension of PS by introducing different
``eating rates'' and show that this set of mechanisms
characterizes the set of all ordinally efficient
allocations.

Katta and Sethuraman~\cite{KS06} generalize the setting of
Bogomolnaia and Moulin to the ``full domain'', i.e., agents
may be indifferent between pairs of goods.
Thus, each agent partitions the goods by equality and
defines a total order on the equivalence classes of her
partition (the agent is equally happy with any good received
from an equivalence class). For this setting, they give a
randomized mechanism that is a generalization (different
from ours) of PS and achieves the same game-theoretic
properties as PS. 

A mechanism that probabilistically allocates indivisible
goods can also be viewed as one that fractionally allocates
divisible goods. 
Under the latter interpretation, the SG
mechanism is equivalent to PS for the case that $m = n$ and
the quantity of each good and the requirement of each agent
is one unit.  An important difference is that Bogomolnaia
and Moulin analyze PS under an incomplete preference
relation (stochastic dominance) in which ``most'' allocation shares
are incomparable; whereas we analyze SG under a complete
preference relation (lexicographic) that is a refinement of
stochastic dominance. The statement that a mechanism's allocation
is Pareto efficient w.r.t.\ lexicographic preferences is
considerably stronger than the same statement w.r.t.\
stochastic dominance preferences, because each agent's share is
dominated by more alternative shares in the lexicographic
order, than it is in the sd order; so, fewer
allocations are Pareto efficient in the lexicographic than in
the sd order.  Our results should be viewed
therefore as demonstrating that the PS mechanism and its
natural generalization, SG, have far stronger game-theoretic
properties than even envisioned in~\cite{BM01}.

For somewhat related questions primarily regarding exchange economies, see Barber\`{a} and Jackson~\cite{BarberaJ95}, Nicolo~\cite{Nicolo04}, Ghodsi et al.~\cite{Ghodsi11}, and Li and
Xue~\cite{Li}. Finally, we remark only that the problem of allocating a
\textit{single} divisible good among multiple agents with
known privileges is considerably different; the principal
issue studied in that problem is how to make the division in
a manner that is fair w.r.t.\ the given privileges. This is
known as the bankruptcy problem and has a long history,
e.g., see~\cite{ONeill82,AumannM85}. Despite an interesting
resemblance between the PS mechanism and some of the
mechanisms used in the solutions of that
problem~\cite{Kaminski00}, the issues at stake in the
bankruptcy literature are distinct from those in our paper
and its predecessors.

Saban and Sethuraman \cite{SS} solve some of the open problems stated in an earlier version of this paper. They consider the special case that all $r_i=1$. First they show that our condition $\min_j q_j \geq  \max_i r_i$ is tight in the sense that for any $q_1<1$ there exists an $n$, a finite list  $q_2,\ldots,q_n$, and agent preferences such that no mechanism is efficient, envy-free and strategyproof. They also show that if $q_1<1$, and list $q_2,\ldots,q_n$ and the agent preferences are given, then SG achieves all three properties if and only if any mechanism achieves all three properties. 
Finally for the generalized setting of Katta and Sethuraman, where agents can be indifferent between objects,
they show that no mechanism can satisfy all three properties. 

Since the PS rule is not strategyproof, recent work has studied the situation where agents are strategic. A Nash equilibrium for the PS rule is a 
preference profile for which no agent has an incentive to report a different profile. 
\cite{Aziz1} show that a pure Nash equilibrium is guaranteed to exist; however determining whether a given preference profile is 
a Nash equilibrium is coNP-complete.

\section{The Synchronized Greedy Mechanism}
The mechanism is simple. Each agent $i$ submits a preference
list $\sigma_i$. The
submitted list may or may not, of course, agree with his true preference
list $\pi_i$.

(A simple case to consider is that of $M = m = n$ and
all $q_j = r_i = 1$. Because of the restriction that each preference list must include
all $m$ singleton bundles, each agent's preference list in this case is a permutation of the $m$ goods. Despite being quite special, this case, or the slightly more general case in which $M=m \leq n$ and all $r_i$ are equal, is already interesting to analyze and is well motivated by the examples, mentioned earlier, involving sharing of tasks or of scarce public resources.)

The mechanism simulates the following physical process.
Consider each good $j$ as a ``liquid'', and each agent as a receptacle of capacity $r_i$. 
The mechanism starts out at time $0$ by (for all $i$ in parallel) pouring bundle $\lambda^{\sigma_i(1)}$ into receptacle $i$ at rate $r_i$
 units of
liquid per unit time. Each good $j$ is therefore being drained at rate $\sum_i r_i \lambda^{\sigma_i(1)}_j$.
(Note that since $\sum_j {\lambda_{j}} = 1$, the total liquid being added to receptacle $i$ per unit time is $r_i$, as desired.)

This continues until one of the
goods, say $j$, is exhausted. For all agents who were currently being allocated bundles with $j$ in their support, their favorite Leontief bundle has now been exhausted. (We say that a Leontief bundle has been \textit{exhausted} at a given time if any of the goods in its support has been exhausted, and otherwise that the bundle is {\em available}.)
All such agents, $i$, are immediately allocated the next available bundle on their preference list, and the pouring of bundles continues. The algorithm continues in this way, allocating to an agent from the next available bundle whenever the current bundle has been exhausted. Since the singleton bundles are included in all preference lists, all agents continuously receive goods at rate $r_i$ until time $1$, at which time they simultaneously complete their full allocation.

Observe that the Leontief allocation $l$ constructed by SG satisfies $l=L^\pi(A(l))$ because the bundles are provided to each agent greedily based on the availability of goods.

This continuous process can easily be converted into a
discrete algorithm with the run time cited earlier: maintain a priority
queue of goods, keyed by termination times. Each time a good is exhausted, each agent is assigned its next unexhausted bundle, and an updated termination time for each good is computed using the coefficients of the active bundles.

Observe that if an agent prefers bundle $\lambda$ to bundle $\lambda'$, and $\support(\lambda) \subseteq \support(\lambda')$, then $\lambda'$ may be removed from the agent's preference list. It cannot be allocated to the agent by SG nor can it be part of any Pareto efficient allocation to the agent.

\suppress{
{\bf Remark:} Observe that if $v_j$ is preferred to $v_l$ by agent $i$ and if the support of $v_j$ is a subset of the support of
$v_l$ then at the time $i$ considers $v_l$, some good in its support will be already exhausted and this choice will not be available. Hence, $i$ 
will never get any allocation corresponding to $v_l$. Thus $v_l$ is a vacuous entry in $i$'s preference list.
}

\section{Properties of the Synchronized Greedy Mechanism}
\subsection{Pareto Efficiency}
Let $l^{\sigma}$ be the allocation created by
the SG mechanism in response to bids $\sigma$ declared by the
agents. As before $\pi$ denotes the truthful bids.

\begin{theorem}
\label{thm.Pareto.newproof}
The allocation produced by the SG mechanism in response to
truthful bids is Pareto efficient w.r.t.\ lexicographic preference. That is to say, for all $l \neq l^\pi$, $\exists i \; l <_i l^\pi$.
\end{theorem} 

\begin{proof} 
For agent $i$ and for $K \geq 1$ let $t_{iK}= \frac{1}{r_i} \sum_{k=1}^K l^\pi_{i \pi_i(k)}$. If agent $i$ receives a positive quantity of his $K$'th-most-favored bundle, then $t_{iK}$ is the time when that bundle is exhausted in SG. If the agent receives nothing from the bundle then the bundle is exhausted in SG no later than $t_{iK}$.

Suppose for contradiction the existence of $l$ s.t.\ $\forall i \; l \geq_i l^\pi$, and for some $i$, $ l >_i l^\pi$. Let $t$ be minimum s.t.\ $ \exists i,K$ s.t.\
$t=t_{iK}< \frac{1}{r_i} \sum_{k=1}^K l_{i \pi_i(k)}$. Note, if $t_{i'K'}<t$ then $t_{i'K'}=\frac{1}{r_{i'}}\sum_{k=1}^{K'} l_{i' \pi_i'(k)}$.

For every one of the bundles $b \in \{\pi_i(1),\ldots,\pi_i(K)\}$ there is a good $j(b)$ that appears positively in $b$ and which is exhausted by time $t$. Since $t_{iK}< \frac{1}{r_i} \sum_{k=1}^K l_{i \pi_i(k)}$ while $t_{iK'}= \frac{1}{r_i} \sum_{k=1}^{K'} l_{i \pi_i(k)}$ for all $K'<K$, some agent $i' \neq i$ receives strictly less of good $j(\pi_i(K))$ in $l$ than in $l^\pi$. Since $j(\pi_i(K))$ is exhausted in SG by time $t$, this means that there is some $K''$ such that $\frac{1}{r_{i'}} \sum_{k=1}^{K''} l_{i' \pi_{i'}(k)} < \frac{1}{r_{i'}} \sum_{k=1}^{K''} l^\pi_{i' \pi_{i'}(k)} \leq t$. This contradicts the minimality of $t$.
\end{proof}

\suppress{
\subsection{Pareto Efficiency OLD}
Let $l^{\sigma}_{ik}$ be the allocation created by
the SG mechanism in response to bids $\sigma$ declared by the
agents. Let $\pi$ denote the bids corresponding to the true
preferences $\pi_i$.

We first give an informal proof which can easily be turned into a formal one, as is done below.
The proof is by contradiction. Suppose there is an allocation, say $A$, that Pareto-dominates the one
given by SG, i.e., some agents get a strictly better allocation under $A$ and the rest get the same allocation as under SG.

Consider a run of SG, and following process, started at the same time.
In the second process each agent $i$ gets allocated goods from $A_i$, i.e., her allocation under $A$, as per her preference 
order. Let $t$ be the first time when some agent, say $i$, improves on her allocation obtained in the second process, over the
one obtained in SG. Since $A$ Pareto-dominates the allocation under SG and since the second process is allocating goods to
each agent per her preference order, i.e., the same order in which two allocations will be compared lexicographically, there must
be such a time. 

Assume that at this point $i$ gets more of Leontief bundle $B$. The reason she get less $B$ in the run of SG
is that some good, say $j \in B$, runs out before time $t$. Since at none of the other agents got less of $j$ in the second run,
compared to the run of SG, the second process has allocated more of good $j$ than $q_j$, leading to a contradiction.

\begin{theorem}
\label{thm.Pareto}
The allocation produced by the SG mechanism in response to
truthful bids is Pareto efficient: For all $l \neq l^\pi$
there is an $i$ such that $l^\pi_{i*} >_i l_{i*}$.
\end{theorem}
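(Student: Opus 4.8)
The plan is to prove the contrapositive in a strengthened form: if $a$ is any allocation such that \emph{no} agent $i$ satisfies $a^\pi_{i*} >_i a_{i*}$, then in fact $a = a^\pi$. Because the lexicographic order is a complete relation and is antisymmetric on distinct shares, this hypothesis is equivalent to saying that \emph{every} agent weakly prefers her share in $a$ to her share in $a^\pi$.

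First I would fix notation for the physical process: for each good $j$ let $\tau_j$ be the time at which $j$ is exhausted. Since each agent eats at rate $r_i$ for total time $1$ and $\sum_j q_j = \sum_i r_i$, every good is exhausted by time $1$, so all $\tau_j$ are well defined; order the goods $j_1,\ldots,j_m$ with $\tau_{j_1}\le\tau_{j_2}\le\cdots\le\tau_{j_m}=1$, breaking ties arbitrarily. The structural fact I would isolate is this: if agent $i$ receives a positive amount of good $j_k$ under SG (equivalently $a^\pi_{ij_k}>0$), then at each instant she is eating $j_k$ the good $j_k$ is her most preferred good among those not yet exhausted; in particular every good she ranks above $j_k$ in $\pi_i$ is already exhausted at that instant, which is strictly before $\tau_{j_k}$, so each such good $j$ has $\tau_j<\tau_{j_k}$ and therefore lies in $\{j_1,\ldots,j_{k-1}\}$.

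The core of the argument is an induction on $k=1,\ldots,m$ with the claim that $a_{ij_k}=a^\pi_{ij_k}$ for every agent $i$. Assume this for all indices smaller than $k$ and fix an agent $i$. If $a^\pi_{ij_k}=0$ then $a_{ij_k}\ge 0=a^\pi_{ij_k}$ trivially. Otherwise $a^\pi_{ij_k}>0$, and by the structural fact every good agent $i$ ranks above $j_k$ lies among $j_1,\ldots,j_{k-1}$, so by the induction hypothesis $a$ and $a^\pi$ give agent $i$ the same amount of each of those goods. Since agent $i$ weakly prefers $a_{i*}$ to $a^\pi_{i*}$, and the two sorted shares already agree in every coordinate she ranks above $j_k$, the coordinate corresponding to $j_k$ must satisfy $a_{ij_k}\ge a^\pi_{ij_k}$. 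Hence $a_{ij_k}\ge a^\pi_{ij_k}$ for \emph{all} $i$; summing over $i$ gives $q_{j_k}=\sum_i a_{ij_k}\ge\sum_i a^\pi_{ij_k}=q_{j_k}$, so every one of these inequalities is an equality and $a_{ij_k}=a^\pi_{ij_k}$ for all $i$. This closes the induction; since every good is some $j_k$, we get $a=a^\pi$, which is the contrapositive of the theorem.

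I expect the only delicate points to be the bookkeeping of the physical process: well-definedness of the $\tau_j$, handling goods that are exhausted simultaneously, and verifying that agent $i$ eats $j_k$ exactly over the interval $[\theta_i,\tau_{j_k})$, where $\theta_i$ is the largest exhaustion time among the goods $i$ ranks above $j_k$ (equivalently $a^\pi_{ij_k}=r_i(\tau_{j_k}-\theta_i)^+$). The one substantive idea is to peel the goods off in order of exhaustion time; that is precisely what makes the structural fact usable inside the induction, since the only goods agent $i$ could profitably ``trade'' against $j_k$ are the ones she prefers to it, and those have already been pinned down. Once the ordering is in place, the step ``everyone weakly gains on good $j_k$, but the column sum is the fixed quantity $q_{j_k}$'' is immediate.
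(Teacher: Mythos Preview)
Your proof is correct and follows essentially the same approach as the paper's: both order the goods by their exhaustion times under SG and exploit the key structural fact that if $a^\pi_{ij}>0$ then every good $i$ truly prefers to $j$ is exhausted strictly earlier. The only cosmetic difference is that the paper argues by minimal counterexample (picking the earliest-exhausted good on which $a$ and $a^\pi$ differ and exhibiting an agent in $C_j^+$ who must strictly lose), whereas you run the equivalent forward induction and use the column-sum $\sum_i a_{ij_k}=q_{j_k}$ to upgrade the inequalities $a_{ij_k}\ge a^\pi_{ij_k}$ to equalities.
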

\begin{proof}
For a collection of bids $\sigma$ let $T^\sigma_k$ be the
time at which bundle $k$ is exhausted if the mechanism is run
with bids $\sigma$. (Note that each time $T^\sigma_k$ is necessarily also 
the exhaustion time of at least one good.)
In particular $T^\pi_k$ is the time at
which bundle $k$ is exhausted if the mechanism is run with the
true preferences $\pi$. Let $\tau_1$ be the first time at
which any bundle is exhausted in $\pi$ and let
$\tau_s$, $s \geq 2$, be the least time $t>\tau_{s-1}$ at
which some bundle is exhausted. Let $R_s$ be the set of bundles
which are exhausted at time $\tau_s$.

For each bundle $k$ let $C_k^+=\{i: l_{ik}<l^\pi_{ik}\}$ and let 
$C_k^-=\{i: l_{ik}>l^\pi_{ik}\}$. Let $K=\{k: C_k^+ \neq \emptyset\} \cup \{k: C_k^- \neq
 \emptyset\}$ be the set of bundles that are allocated differently in $l$ than in $l^\pi$.
Let $s$ be least such that $R_s \cap K \neq
\emptyset$; that is, $\tau_s$ is the first time at which a
bundle of $K$ is exhausted. 
Among the bundles which are exhausted at time $\tau_s$ there must be a bundle $k \in R_s \cap K$ for which $C_k^+$ is nonempty. (Consider any bundle $k_0 \in R_s \cap K$, and let $j \in \support k_0$ be exhausted at time $\tau_s$. 
 All bundles containing $j$ are exhausted by time $\tau_s$. Then $q_j=\sum_{k: j \in \support k} \sum_i l^\pi_{ik} \lambda^k_j \geq \sum_{k: j \in \support k} \sum_i l_{ik} \lambda^k_j $. 
Recalling that there is some $i_0$ for which $l^\pi_{i_0k_0} \neq l_{i_0k_0}$, we conclude there exist $k,i$ for which $l^\pi_{ik} > l_{ik}$.)

Fix $i \in C_k^+$. Then in order that $l_{i*} \geq_i l^\pi_{i*}$, there
must be a $k'$, $\pi_i^{-1}(k')<\pi_i^{-1}(k)$,
$l_{ik'}>l^\pi_{ik'}$. Then $i \in C_{k'}^-$ and so $k' \in
K$. Moreover, since $l^\pi_{ik}>0$, bundle $k$ was available
at the time that agent $i$ requested it,
which can only be after time $T^\pi_{k'}$, so
$T^\pi_{k'}<T^\pi_{k}=\tau_s$.  Letting $s'$ be such that
$\tau_{s'}=T^\pi_{k'}$, we have that $s'<s$ and $k' \in
R_{s'} \cap K$, a contradiction.
\end{proof}
}

\subsection{Strategy-Proofness}
A mechanism is said to be \textit{strategy-proof} if for every agent
and for every list of bids by the remaining agents, the agent cannot
obtain a strictly improved allocation by lying.

\begin{theorem}
\label{thm.incentive-compatible}
\mbox{}
In the non-Leontief case, the SG mechanism is strategy-proof if $\min q_j \geq \max
r_i$.
\end{theorem}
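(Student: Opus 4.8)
It is cleanest to prove the stronger statement that truthful reporting is a dominant strategy: fix an agent $i$ with true preference $\pi_i$ and fix an arbitrary profile $\sigma_{-i}$ of the other agents' bids (we will not use that it is truthful), and show that agent $i$ weakly prefers the allocation it obtains from bidding $\pi_i$ to the one it obtains from bidding any $\sigma_i$. Relabel the goods so that $\pi_i=(1,2,\dots,m)$, i.e.\ good $\ell$ is agent $i$'s $\ell$-th favorite; let $A$ be the run with bids $(\pi_i,\sigma_{-i})$ and $B$ the run with bids $(\sigma_i,\sigma_{-i})$, and let $T^A_j,T^B_j$ denote the times at which good $j$ is exhausted in the two runs.

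The first step is two elementary identities. In run $A$ agent $i$ is at every instant receiving the lowest-indexed good not yet exhausted, so for each $\ell$ it is receiving a good of $\{1,\dots,\ell\}$ exactly during the interval on which some such good is still available, whence
\[
\sum_{\ell'\le\ell}a^A_{i\ell'}=r_i\cdot\max_{\ell'\le\ell}T^A_{\ell'} .
\]
In run $B$ agent $i$ can receive good $j$ only while $j$ is available and receives only one good at a time, so
\[
\sum_{\ell'\le\ell}a^B_{i\ell'}\le r_i\cdot\max_{\ell'\le\ell}T^B_{\ell'} .
\]

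Now suppose toward a contradiction that $a^B_{i*}>_i a^A_{i*}$, and let $\ell$ be least with $a^B_{i\ell}\neq a^A_{i\ell}$; then $a^B_{i\ell}>a^A_{i\ell}$ while $a^B_{i\ell'}=a^A_{i\ell'}$ for all $\ell'<\ell$. Feeding this into the two displays yields $\max_{\ell'\le\ell}T^B_{\ell'}>\max_{\ell'\le\ell}T^A_{\ell'}$ (and, applied at $\ell-1$, $\max_{\ell'<\ell}T^B_{\ell'}\ge\max_{\ell'<\ell}T^A_{\ell'}$). In words: the deviation causes one of agent $i$'s $\ell$ most preferred goods to outlast in $B$ every one of those goods in the truthful run, even though agent $i$'s total receipt of its top $\ell-1$ goods is unchanged. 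The whole theorem is equivalent to ruling this out, and this is the step I expect to be the main obstacle.

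To rule it out I would compare runs $A$ and $B$ event by event through their exhaustion times, arguing by induction on the number of goods agent $i$ has not yet committed to and maintaining the invariant that up to the current event every good of $\{1,\dots,\ell\}$ is exhausted in $B$ no later than its exhaustion time in $A$ would dictate, shifted by the time agent $i$ has so far spent in $B$ feeding less-preferred goods. The hypothesis $\min_j q_j\ge\max_i r_i$ is used exactly here: since no good's quantity is smaller than agent $i$'s own requirement, agent $i$'s instantaneous contribution of $r_i$ units per unit time to a good accounts for no more of its depletion than the time for which agent $i$ chooses to withhold that flow; consequently, by rerouting its flow among goods agent $i$ can neither prolong the life of one of its top goods by more than the time it spends elsewhere, nor---the delicate, cascading case---exhaust some auxiliary good fast enough to divert a competitor off one of those top goods sooner than that competitor would have moved on regardless. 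Without the hypothesis a single high-demand agent can do precisely this, which is the mechanism behind the counterexample. Turning this invariant into a rigorous induction---bookkeeping which agents feed which goods in each run and how the exhaustion events of $B$ interleave with those of $A$---is the technical heart of the proof.
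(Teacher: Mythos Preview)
Your two displayed identities are correct and give a clean reduction: if the deviation could help, then some prefix $\{1,\dots,\ell\}$ of agent $i$'s favorite goods would have to survive strictly longer in run $B$ than in run $A$. That is the right statement to aim for. But from that point on the proposal is only a sketch. You explicitly flag the remaining step as ``the technical heart of the proof'' and describe an invariant only in words (``exhausted in $B$ no later than its exhaustion time in $A$ would dictate, shifted by the time agent $i$ has so far spent in $B$ feeding less-preferred goods''), without stating it precisely or proving it is maintained. The difficulty you wave at---cascades in which agent $i$'s rerouting causes \emph{other} agents to switch earlier and thereby free up one of agent $i$'s top goods---is real, and nothing in your outline shows how the hypothesis $\min_j q_j\ge \max_i r_i$ blocks it. Your informal reading of the hypothesis (``agent $i$'s contribution accounts for no more of a good's depletion than the time it withholds that flow'') bounds only the \emph{direct} effect of agent $i$'s rerouting on a single good, not the indirect effect through other agents' switches; the counterexample without the hypothesis works precisely through such an indirect effect.

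The paper's argument supplies exactly the missing piece, and it is organized differently. Rather than proving your timing inequality for a prefix, it isolates the \emph{sacrificed good} $B$: the first good (in $\pi_i$-order) that agent $i$ receives under truth but postpones in $\sigma_i$. A short minimal-counterexample lemma shows that every good exhausted before $B$ under truth is exhausted no later under the deviation; from this, agent $i$'s allocation on goods preceding $B$ is unchanged, and agent $i$ starts eating $B$ strictly later under $\sigma$. The hypothesis enters at a single, concrete point: it guarantees that some agent other than $i$ also receives $B$ under truth (since $q_B\ge r_i$ forces $a^\pi_{iB}<q_B$ once agent $i$ is not getting only its top choice). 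That competitor is still eating $B$ after time $T^\pi_B$ in run $B$, so agent $i$ cannot recoup the lost quantity of $B$, yielding $a^\sigma_{iB}<a^\pi_{iB}$ and hence a lexicographic loss at position $B$. This is the idea your sketch is missing: instead of an event-by-event coupling of the two runs, one pins down a specific good at which the deviation must hurt, and the hypothesis is used only to ensure a rival for that good.
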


\suppress{ 
\textit{Proof sketch:} 
Consider two runs of SG, the first with the original bids and the second with agent $i$'s false bid and all other bids same.
Clearly, in the latter, $i$ will demote some good in the preference order. Let $A$ be her most preferred good
that she demotes. Assume that in the first run $i$ started getting good $A$ at time $t_A$.
Since the amount of $A$ available is at least as large as the requirement of $i$ and since $i$'s allocation is not
entirely composed of $A$, there is another agent, say $j$, who also got $A$ in the first run. (Here the condition on $r_i$'s and $q_j$'s comes critically into play.)

Let $G_j$ be the goods that $j$ prefers to $A$. At any time $t \geq
t_A$, and until $A$ is exhausted in the second run, the total  amount
of the goods $G_j$ still unallocated is (weakly) more in the first run
than in the second run-- the reason is that $i$ starts consuming other
goods instead of $A$ in the second run. Therefore, in the second run, $j$
must start receiving $A$ at least as early as in the first run. This holds
for all agents who get $A$ in the first run. Furthermore, since agent
$i$ starts receiving $A$ later in the second run compared to  the
first run, she will get less of $A$, thereby getting an inferior
allocation in the second run. 
}

\suppress{
To sketch the intuition for Theorem~\ref{thm.incentive-compatible}, suppose that an agent were to lower the ranking of his favorite good.
The hypothesis of the theorem ensures, effectively, that there is genuine competition for every good; by considering the times at which the various goods are exhausted, we show that the rank demotion causes competing agents to obtain strictly more of that good than under the truthful bid. For lexicographic preferences this is guaranteed to be an adverse effect on the agent. The argument is a little more complicated when it comes to rank demotions further down the agent's preference list but the same idea carries through.
}


\begin{proof}
Without loss of generality focus on agent $1$. For the remainder of
this proof $\pi_2,\ldots,\pi_n$ are arbitrary bids by the agents
$2,\ldots,n$, but $\pi_1$ is agent $1$'s truthful bid.
We need to
show that for any bid $\sigma_1$ (and write
$\sigma=(\sigma_1,\pi_2,\ldots,\pi_n)$), $a^\sigma_{1*}
\leq_1 a^\pi_{1*}$. The theorem is trivial if
$a^\sigma=a^\pi$.

The theorem is also trivial if agent $1$, bidding
truthfully, receives only his top choice. So we may suppose
that agent $1$ does not receive the entire allocation of any
one good.

We may also suppose that if $a^\sigma_{1j}=0$ and
$a^\sigma_{1j'}>0$, then $\sigma_1^{-1}(j)>\sigma_1^{-1}(j')$. (Define $\sigma_1^{-1}(j)$ to be the $s$ such that
  $\sigma_1(s)=j$. Define $\pi_1^{-1}(j)$ analogously.) In
other words, all the requests in $\sigma_1$ that come up
empty may as well be deferred to the end.

 Let $G(j)=\{j': \pi_1^{-1}(j') \leq
  \pi_1^{-1}(j)$ and $a^\pi_{1j'}>0\}$. These are the goods that agent $1$ weakly prefers to good $j$ and receives a positive quantity of in the allocation $a^\pi$.

Say that agent $1$ \textit{sacrifices} good $j$ in $\sigma$
if:
\begin{enumerate}
\item \label{1sac1} $a^\pi_{1j}>0$,
\item \label{1sac2} $\sigma_1^{-1}(j)>\left|G(j)\right|$, and
\item \label{1sac3}
$\pi_1^{-1}(j)<\pi_1^{-1}(j')$ if $j'$ also satisfies
  (\ref{1sac1}),(\ref{1sac2}).
\end{enumerate}
That is to say, $j$ is the most-preferred good which agent
$1$ receives a positive quantity of in $\pi$, but requests
later in $\sigma$ than in $\pi$.

For a collection of bids $\rho$ let $T^\rho_j$ be the
time at which good $j$ is exhausted if the mechanism is run
with bids $\rho$.

Agent $1$ must sacrifice some good, call it $B$, since
otherwise the allocation will not change. See
Figure~\ref{fig-strat-pf-pf}. We will show that agent $1$ receives
strictly less of $B$ in $\sigma$ than in $\pi$, and that this is not
compensated for by getting more of more-preferred goods.

\begin{figure}
\includegraphics[height=70mm]{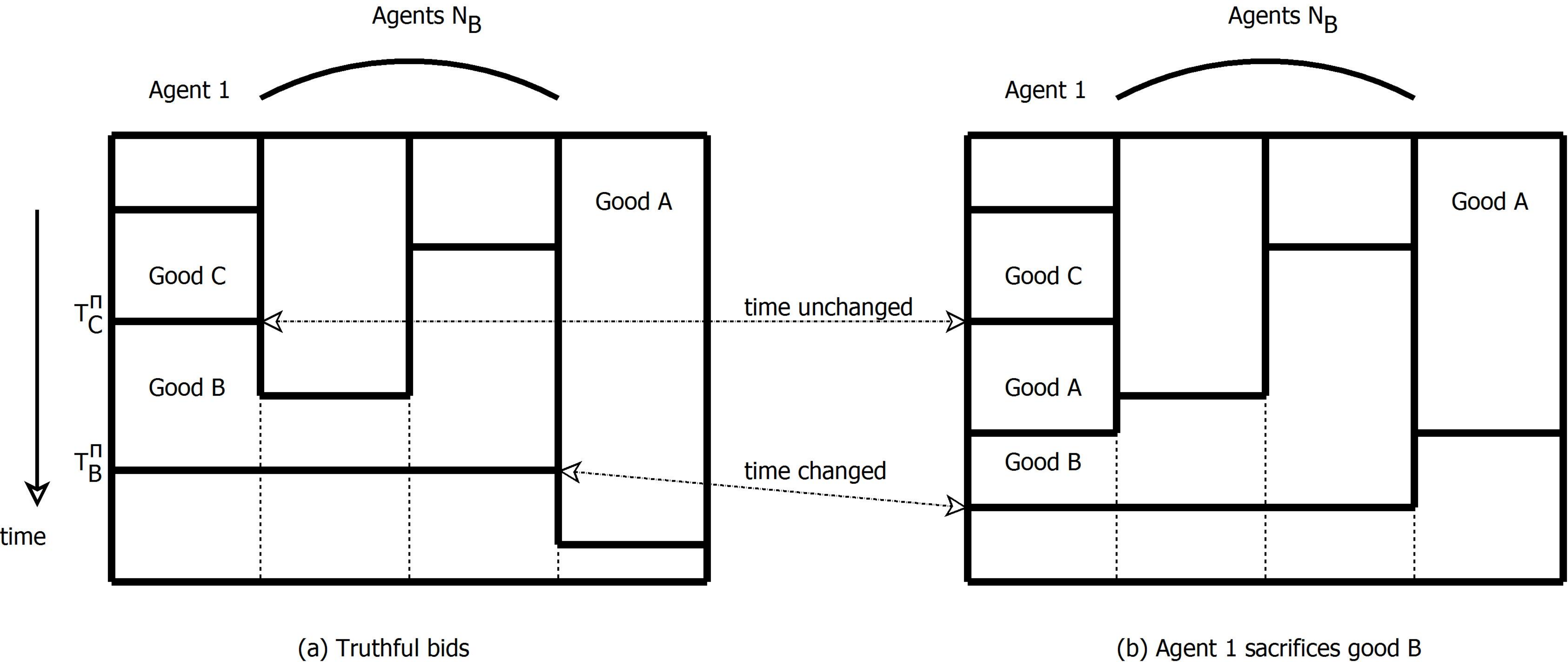}
\caption{The mechanism with truthful vs.\ lying bids of
  Agent 1}
\label{fig-strat-pf-pf}
\end{figure}

\begin{lemma} If $D$ is a good and
  $T^\pi_D<T^\pi_B$, then $T^\sigma_D \leq
  T^\pi_D$. \label{timing-lemma} \end{lemma}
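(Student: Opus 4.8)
The plan is to localize the comparison of the two runs to the interval $[t_B,T^\pi_B]$ on which, under the truthful bids $\pi$, agent $1$ is pouring into $B$, and to show that on this interval the run $\sigma$ stays ``ahead'' of the run $\pi$ in the sense that every good other than $B$ exhausted under $\pi$ is already exhausted under $\sigma$. First I would pin down the structure of $\sigma_1$ and an initial coincidence. Write $G(B)=\{g_1,\dots,g_k\}$ with $\pi_1^{-1}(g_1)<\dots<\pi_1^{-1}(g_k)$, so $g_k=B$ and $k=|G(B)|$. Since $B$ is the $\pi_1$-first good agent $1$ sacrifices, none of $g_1,\dots,g_{k-1}$ is sacrificed; as each has $a^\pi_{1g_s}>0$, it must fail condition (2), i.e.\ $\sigma_1(g_s)\le|G(g_s)|=s$, and an immediate induction then forces $\sigma_1(g_s)=s$ for $s=1,\dots,k-1$, while $\sigma_1(B)>k$. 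Hence under $\sigma$ agent $1$ pours into $g_1,\dots,g_{k-1}$ in this order, which is exactly the order in which agent $1$ pours into those goods under $\pi$ before reaching $B$, and only afterwards turns to other goods (possibly $B$, but ``late''). Let $t_B$ be the instant at which, under $\pi$, good $g_{k-1}$ is exhausted and agent $1$ turns to $B$ (put $t_B:=0$ if $k=1$); since $a^\pi_{1B}>0$ we have $t_B<T^\pi_B$. A routine induction over the exhaustion events before $t_B$ shows the runs coincide on $[0,t_B]$: at each such event the exhausted goods are the same, agents $2,\dots,n$ react identically, and so does agent $1$, as the good it turns to next is some $g_{s+1}$, which is not yet exhausted because $a^\pi_{1g_{s+1}}>0$.

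Next comes the comparison on $[t_B,T^\pi_B]$, where under $\pi$ agent $1$ sits on $B$ throughout while every other agent is truthful in both runs. I would prove, by a joint induction over the (now possibly de-synchronized) exhaustion events of the two runs, that for every $t\in[t_B,T^\pi_B)$: \emph{(i)} $E^\pi(t)\subseteq E^\sigma(t)$, where $E^\rho(t)$ denotes the set of goods exhausted by time $t$ in run $\rho$ --- note $B\notin E^\pi(t)$ on this range, which upgrades the natural ``$j\ne B$'' statement to a genuine inclusion; and \emph{(ii)} for every agent $i\ne1$, the good $i$ pours into under $\sigma$ is at least as far down $\pi_i$ as the good $i$ pours into under $\pi$, and whenever it is strictly further, $i$'s current good under $\pi$ already lies in $E^\sigma(t)$. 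Granting \emph{(i)} and \emph{(ii)}, between consecutive events the drainage rate of any $j\ne B$ not yet exhausted under $\sigma$ is at least its drainage rate under $\pi$: agent $1$ adds nothing to the $\pi$-rate of $j$ (it is on $B$) and a nonnegative amount to its $\sigma$-rate, and every truthful agent pouring into $j$ under $\pi$ is, by \emph{(ii)}, also pouring into $j$ under $\sigma$ (otherwise $j$ would already lie in $E^\sigma(t)$). Starting from the equality of cumulative pours at $t_B$, this yields $f^\sigma_j(t)\ge f^\pi_j(t)$ for all $j\ne B$, which is \emph{(i)}; and \emph{(ii)} persists because goods $\ne B$ can only exhaust earlier under $\sigma$, which can only push truthful agents further along their lists.

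Granting this invariant for all $t<T^\pi_B$, the lemma follows: if $T^\pi_D<T^\pi_B$ then $D\ne B$, and taking $t=T^\pi_D$ gives $D\in E^\pi(t)\subseteq E^\sigma(t)$, i.e.\ $T^\sigma_D\le T^\pi_D$. The part I expect to be the main obstacle is the event-by-event maintenance of \emph{(ii)} when an exhaustion happens at different times in the two runs. When a good $h\ne B$ runs out under $\sigma$ (by \emph{(i)}, no later than under $\pi$), one must check that every rerouted truthful agent still satisfies \emph{(ii)}: its new current good under $\pi$, being skipped under $\sigma$, lies in $E^\sigma$. Symmetrically, when $h$ runs out under $\pi$ while it is already exhausted under $\sigma$, one must check that the $\pi$-reroute cannot overtake the $\sigma$-reroute --- precisely the point at which $E^\pi(t)\subseteq E^\sigma(t)$ is invoked. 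Marrying \emph{(i)} and \emph{(ii)} cleanly through these de-synchronized events is the delicate step.
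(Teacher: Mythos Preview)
Your argument is correct, but it takes a substantially different route from the paper's. The paper proves the lemma in a few lines by a minimal-counterexample argument: pick a good $D$ with $T^\pi_D<T^\pi_B$ and $T^\sigma_D>T^\pi_D$ minimizing $T^\pi_D$; then every agent $i$ with $a^\pi_{iD}>0$ has all his truthfully-preferred goods $j$ exhausting no later in $\sigma$ than in $\pi$ (by minimality, since $T^\pi_j<T^\pi_D$), so each such $i$ requests $D$ at least as early in $\sigma$, forcing $T^\sigma_D\le T^\pi_D$.

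Your approach instead establishes the structure of $\sigma_1$ up front (that $g_1,\dots,g_{k-1}$ occupy positions $1,\dots,k-1$), deduces that the two runs coincide on $[0,t_B]$, and then maintains a coupled invariant $E^\pi(t)\subseteq E^\sigma(t)$ on $[t_B,T^\pi_B)$ via a drainage-rate comparison. This is longer but more explicit; in particular, it makes transparent why the $i=1$ case goes through---the paper's sentence ``$i$ requests $D$ at a time in $\sigma$ that is at least as soon as the time $i$ requests it in $\pi$'' is immediate for $i\neq 1$ but for $i=1$ silently relies on exactly the positional fact about $\sigma_1$ that you prove at the outset (and that the paper only formalizes afterwards in its Lemma on the relative order of $j_1,j_2$). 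Your invariant \emph{(ii)} is slightly redundant as stated (the ``strictly further'' clause is automatic for truthful agents once \emph{(i)} holds), so the induction really hinges on \emph{(i)} alone together with the observation that agent $1$ contributes nothing to any $j\neq B$ under $\pi$ on this interval. What the paper's approach buys is brevity; what yours buys is a self-contained treatment of agent $1$'s deviation and a picture of the two processes that will serve you again in the remainder of the strategy-proofness proof.
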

\begin{proof}
Supposing the contrary, let $D$ be a counterexample
minimizing $T^\pi_D$. Since $T^\pi_D<T^\pi_B$,
 $D \neq B$.

Now let $i$ be any agent (who may or may not be agent $1$)
for whom $a^\pi_{iD}>0$. Due to the minimality of $D$, each
of the goods $j$ which $i$ prefers in $\pi$ to $D$, has
$T^\sigma_j \leq T^\pi_j$.  Therefore $i$ requests $D$ at a
time in $\sigma$ that is at least as soon as the time $i$
requests it in $\pi$.

Since this holds for all $i$ who received a positive
allocation of $D$ in $\pi$, the lemma follows.
\end{proof}
Let $N_B$ be the set of agents $i\neq 1$ for whom
$a^\pi_{iB}>0$. The condition on $r_i$'s and $q_j$'s ensures that this set is nonempty.

Due to the lemma, for each agent in $N_B$, the request time for $B$ in $\sigma$ is
weakly earlier than it is in $\pi$.
Now let $C$ be the good such that $\pi_1^{-1}(C)$ is maximal
subject to $\pi_1^{-1}(C)<\pi_1^{-1}(B)$ and $a^\pi_{1C}>0$. Due to the lemma, all goods $j'$ such that
$\pi_1^{-1}(j')\leq \pi_1^{-1}(C)$ have $T^\sigma_{j'} \leq
T^\pi_{j'}$. Next we show:
\begin{prop}
If $\pi_1^{-1}(j')\leq \pi_1^{-1}(C)$, then
$a^\sigma_{1j'}=a^\pi_{1j'}$. \end{prop}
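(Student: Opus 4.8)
The claim is that for every good $j'$ with $\pi_1^{-1}(j') \leq \pi_1^{-1}(C)$, agent 1 receives the same amount in $\sigma$ as in $\pi$. These are precisely the goods agent 1 truthfully prefers strictly to $B$ (together with $C$ itself). The intuition: $B$ was defined to be the *most preferred* good that agent 1 sacrifices, so for goods above $C$ agent 1 is still requesting them in the same relative order as in $\pi$, and (by Lemma~\ref{timing-lemma}, via the good $C$ and the preceding remark) all these goods terminate weakly earlier in $\sigma$ than in $\pi$. So agent 1 cannot be getting *more* of any of them; and a counting/conservation argument should force equality.

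**The plan.**

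First I would order the goods $j'$ with $\pi_1^{-1}(j') \le \pi_1^{-1}(C)$ by agent 1's true preference and argue by induction on this order that $a^\sigma_{1j'} = a^\pi_{1j'}$. Since $B$ is the first good agent 1 sacrifices, for every such $j'$ with $a^\pi_{1j'}>0$ we have $\sigma_1(j') \le |G(j')|$, and combined with the inductive hypothesis (that all strictly-more-preferred goods of agent 1 are allocated to her identically in $\sigma$ and $\pi$), agent 1 begins requesting $j'$ in $\sigma$ at exactly the same time she begins requesting it in $\pi$, namely at time $\sum_{j'' \in G(j'), j'' \ne j'} a^\pi_{1j''}$. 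Agent 1 stops receiving $j'$ either when she is full (impossible here, since she receives positive amounts of later goods) or when $j'$ is exhausted, i.e.\ at time $T^\sigma_{j'}$. Hence $a^\sigma_{1j'} = r_1\bigl(T^\sigma_{j'} - (\text{start time})\bigr)$, and comparing with the analogous formula for $\pi$, the equality $a^\sigma_{1j'} = a^\pi_{1j'}$ is equivalent to $T^\sigma_{j'} = T^\pi_{j'}$. By the remark following Lemma~\ref{timing-lemma} (all goods $j'$ with $\pi_1^{-1}(j') \le \pi_1^{-1}(C)$ satisfy $T^\sigma_{j'} \le T^\pi_{j'}$), I already have $\le$; the work is the reverse inequality.

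For the reverse inequality $T^\sigma_{j'} \ge T^\pi_{j'}$ I would use conservation of mass for good $j'$. Write $T^\sigma_{j'} \cdot q_{j'}$ (if $j'$ is never abandoned early by anyone — more carefully, sum over agents of the time each agent spends pouring $j'$) and compare the set of agents receiving $j'$ at each instant under $\sigma$ versus under $\pi$. The key point: an agent $i \ne 1$ can only request $j'$ *earlier* in $\sigma$ than in $\pi$, never later — because by Lemma~\ref{timing-lemma} and minimality all of $i$'s truly-more-preferred goods terminate weakly earlier in $\sigma$, so $i$'s request times only move forward — which by itself would tend to make $q_{j'}$ get consumed *faster*, i.e.\ $T^\sigma_{j'} \le T^\pi_{j'}$, the wrong direction. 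So I need the inductive hypothesis to also cut off this acceleration: because agent 1 takes exactly as much of the goods above $C$ as before, and the goods above $C$ terminate at exactly the same times (this is what the induction is proving, in order), there is no "freed up" quantity of higher goods for other agents to race through faster, and the system above the level of $j'$ is, in terms of quantities consumed, identical. The cleanest way is probably to prove a slightly stronger statement by simultaneous induction: for $j'$ with $\pi_1^{-1}(j') \le \pi_1^{-1}(C)$, both $T^\sigma_{j'} = T^\pi_{j'}$ and $a^\sigma_{1j'} = a^\pi_{1j'}$, and moreover the set-valued function $t \mapsto \{i : i \text{ pouring } j' \text{ at time } t\}$ agrees in $\sigma$ and $\pi$; then mass conservation closes the loop.

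**Main obstacle.**

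The hard part is the reverse timing inequality $T^\sigma_{j'} \ge T^\pi_{j'}$: a priori, other agents requesting good $j'$ sooner in $\sigma$ (because their higher-priority goods ran out sooner) could exhaust $j'$ earlier than in $\pi$, which would give agent 1 strictly less than $a^\pi_{1j'}$ rather than the asserted equality. Dismissing this requires showing that \emph{in aggregate nothing changes above the level of} $j'$: the total amount of each higher good consumed, and the schedule of who consumes it when, is unchanged — which is exactly why the induction must carry the stronger "identical pouring schedule" hypothesis rather than just the quantity equality. Getting that invariant stated and propagated correctly across the discrete sequence of exhaustion events $\tau_k$ is the delicate bookkeeping; once it is in place, equality of the $T$'s and hence of agent 1's shares follows from conservation of mass for $j'$.
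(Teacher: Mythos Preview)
Your plan differs from the paper's in a key respect: the paper never establishes the reverse inequality $T^\sigma_{j'}\geq T^\pi_{j'}$. Instead it takes a minimal counterexample $j'$ and splits into two cases. If $a^\sigma_{1j'}<a^\pi_{1j'}$ then, since the more-preferred coordinates agree by minimality, $a^\sigma_{1*}<_1 a^\pi_{1*}$, which is exactly the conclusion of the enclosing theorem---so this case is disposed of immediately. If $a^\sigma_{1j'}>a^\pi_{1j'}$, the paper invokes a short ordering lemma (Lemma~\ref{insideLem1}: goods $j_1\in G(B)$ keep their $\pi_1$-relative order in $\sigma_1$) to conclude that in the $\sigma$-run agent~1 has already consumed all of $G(j')$ by the time $j'$ is exhausted, whence $r_1 T^\sigma_{j'}\geq \sum_{j'':\pi_1^{-1}(j'')\leq \pi_1^{-1}(j')} a^\sigma_{1j''}>r_1 T^\pi_{j'}$, contradicting the one-sided bound $T^\sigma_{j'}\leq T^\pi_{j'}$ already obtained from Lemma~\ref{timing-lemma}. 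Thus the paper uses only the easy direction of the timing comparison, together with the surrounding proof-by-contradiction frame, and avoids entirely the ``main obstacle'' you identify.

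Your route---showing the two runs are literally identical up through time $T^\pi_C$---can also be made to work, and in fact yields the Proposition without appealing to the willing-participant hypothesis; but there is a gap in the execution. You assert that agent~1 begins requesting $j'$ in $\sigma$ at time $\sum_{j''\in G(j')\setminus\{j'\}}a^\pi_{1j''}/r_1$, citing only ``$\sigma_1(j')\leq |G(j')|$'' plus the inductive hypothesis on allocations. That does not follow: the start time in $\sigma$ is governed by the goods that precede $j'$ \emph{in $\sigma_1$}, and you have not argued that those are exactly $G(j')\setminus\{j'\}$. This is true---writing $G(C)=\{g_1,\ldots,g_{k-1}\}$ in $\pi_1$-order, the constraints $\sigma_1^{-1}(g_\ell)\leq |G(g_\ell)|=\ell$ together with distinctness force $\sigma_1^{-1}(g_\ell)=\ell$ by pigeonhole---but this is precisely the content of the paper's Lemma~\ref{insideLem1}, and it is the step you are missing. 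Once you have it, your ``main obstacle'' evaporates: agent~1's pouring sequence in $\sigma$ through position $k-1$ is literally $g_1,\ldots,g_{k-1}$, identical to her effective sequence in $\pi$; since all other agents are truthful, the entire system evolves identically up to time $T^\pi_C$, and the equalities $T^\sigma_{j'}=T^\pi_{j'}$ and $a^\sigma_{1j'}=a^\pi_{1j'}$ fall out at once---no delicate induction over the $\tau_k$ is needed.
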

\begin{proof} Supposing the contrary, let
$\pi_1^{-1}(j')$ be minimal such that $\pi_1^{-1}(j')\leq
  \pi_1^{-1}(C)$ and
$a^\sigma_{1j'}\neq
a^\pi_{1j'}$. There are two possibilities to consider.

(a) $a^\sigma_{1j'} < a^\pi_{1j'}$. This is not possible
 because then $a^\sigma_{1*} <_1 a^\pi_{1*}$.

(b) $a^\sigma_{1j'} > a^\pi_{1j'}$. Note:
\begin{lemma} Let $j_1,j_2$ be such that
$\pi_1^{-1}(j_1)\leq \pi_1^{-1}(B)$, $\pi_1^{-1}(j_2)\leq
\pi_1^{-1}(B)$, $a^\pi_{1j_1}>0$, and
$\pi_1^{-1}(j_1)<\pi_1^{-1}(j_2)$. Then
$\sigma_1^{-1}(j_1)<\sigma_1^{-1}(j_2)$. \label{insideLem1} \end{lemma}
\begin{proof} Consider the least $j_1$ that is part of a
  pair $j_1,j_2$ violating
  the lemma. Then $j_1$ satisfies conditions
  (\ref{1sac1}),(\ref{1sac2}) above, contradicting that $B$ is
  the good sacrificed by agent $1$. \end{proof}
It follows that $T^\sigma_{j'} \geq \sum_{j'':
  \pi_1^{-1}(j'') \leq  \pi_1^{-1}(j')} a^\sigma_{1j''}$.
Due to the minimality of $j'$, this means that if  $a^\sigma_{1j'} >
a^\pi_{1j'}$, then $T^\sigma_{j'} >T^\pi_{j'}$,
  contradicting our earlier conclusion. This completes
  demonstration of the Proposition. \end{proof}
A consequence of the Proposition is that $T^\sigma_C=T^\pi_C$.

Since agent $1$ sacrifices $B$, his request time for $B$ in
$\sigma$ is strictly greater than his request time
for $B$ in $\pi$.

Recall that $N_B$ is nonempty. At time $T^\pi_B$, the agents
of $N_B$ have received as least as much of $B$ in $\sigma$
as they have in $\pi$, and the latter is positive. On the
other hand, at the same time $T^\pi_B$, agent $1$ has
received strictly less of $B$ in $\sigma$ than he has in
$\pi$.
In order for agent $1$
to receive at least as much of $B$ in $\sigma$ as in
$\pi$, he would have to receive all of $B$ that is
allocated after time $T^\pi_B$; however, that is not
possible, because the set of agents receiving $B$ after $T^\pi_B$
includes $N_B$. Thus $a^\sigma_{1*} <_1 a^\pi_{1*}$.
\end{proof}

\subsection{Necessity of a Hypothesis on $\{r_i\},\{q_j\}$}
We next provide an example in which strategy-proofness
fails in the absence of the condition $\max r_i \leq \min
q_j$. For convenience now let $r_1 \geq \ldots \geq r_n$ and $q_1
\leq \ldots \leq q_m$.

\begin{example} \label{noStratPf}
Let $n=2$ and $m=3$. Let $r_1=r_2=3/2$; label the goods
$A,B,C$, let $q_A=q_B=q_C=1$, and let the
preference lists be $\pi_1=(A,B,C)$, $\pi_2=(B,C,A)$. If
agent $1$ bids truthfully he receives the sorted allocation
$(1, 0, 1/2)$.  If instead he bids $(B,A,C)$ (while agent
$2$ bids truthfully), he receives the improved sorted
allocation $(1, 1/2, 0)$. See Figure~\ref{fig-n2m3}.
\end{example}

\begin{figure}
\includegraphics[height=60mm]{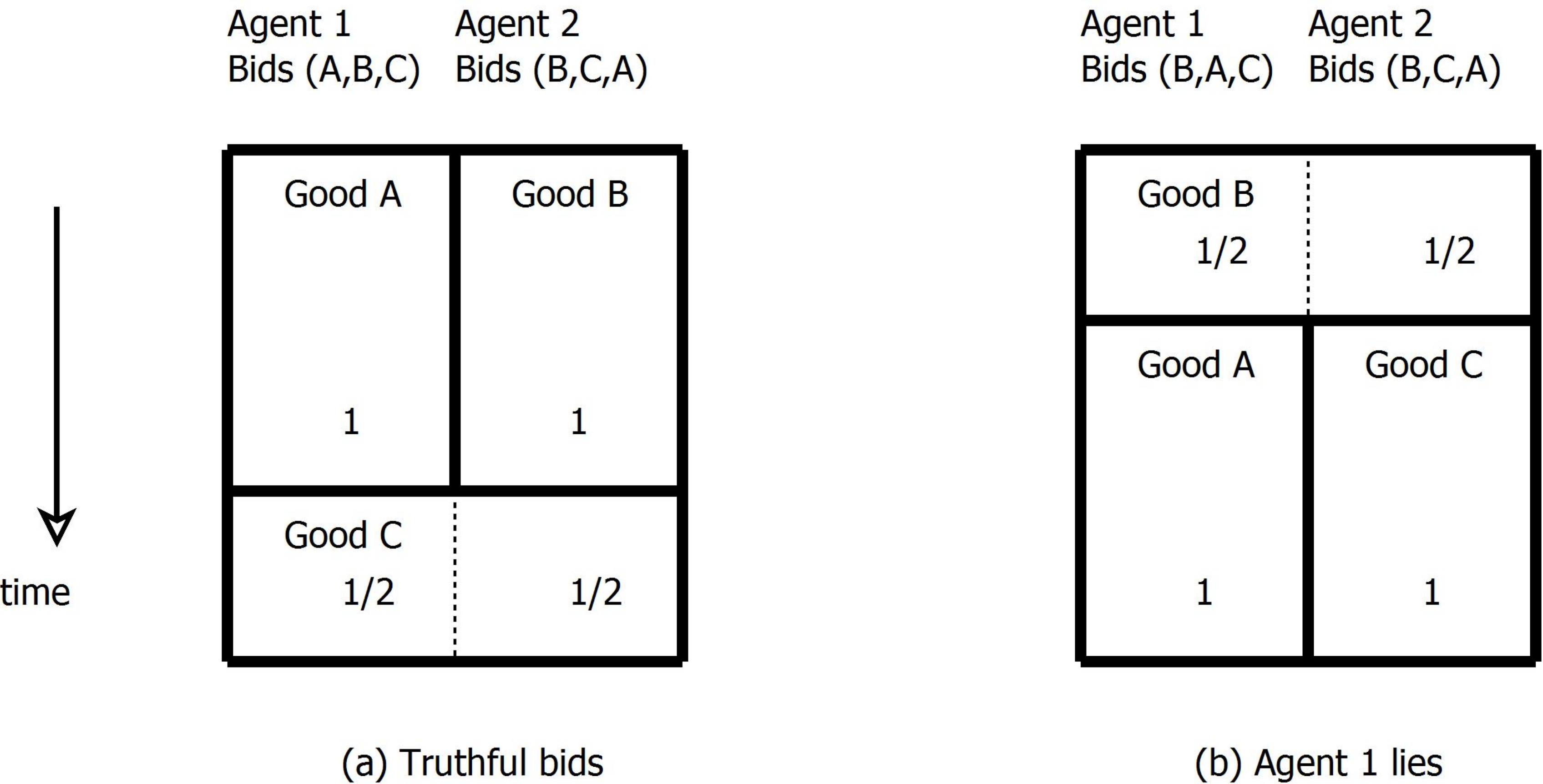}
\caption{Failure of strategy-proofness without the
  hypothesis of Theorem~\ref{thm.incentive-compatible}}
\label{fig-n2m3}
\end{figure}

This example does not limit the theorem sharply, because it
uses $r_1=(3/2) q_1$ rather than $r_1$ arbitrarily close to
$q_1$. Jeremy Hurwitz has pointed out that one may construct
similar examples whenever $r_1 \geq q_1/(1-q_2/\sum
q_j)$; this would appear to be a tight bound.

\subsection{Failure of strategy-proofness for the Leontief case}
Theorem~\ref{thm.incentive-compatible} has no equivalent for general
Leontief bundles. Consider the following four-agent system with 
$r_1=r_2=r_3=r_4=1$
and three goods in supply $q_A=q_B=1, q_C=2$.
Agent $1$'s desired
Leontief bundles are in the preference order $(A,B,C)$ (this agent is
interested only in singleton bundles); agent $2$ and $3$'s desired Leontief
bundles are in the order $(\frac{1}{2}A+\frac{1}{2}B, C, A, B)$; agent $4$'s Leontief bundles are in the order $(B,C,A)$.

Under
truthful bidding agent $1$ receives the sorted goods allocation $(1/2,
0, 1/2)$. By bidding instead $(B,A,C)$, agent $1$ receives the
improved sorted goods allocation $(2/3, 1/3, 0)$. See Figure~\ref{fig-n4m3}.

\begin{figure}
\includegraphics[height=50mm]{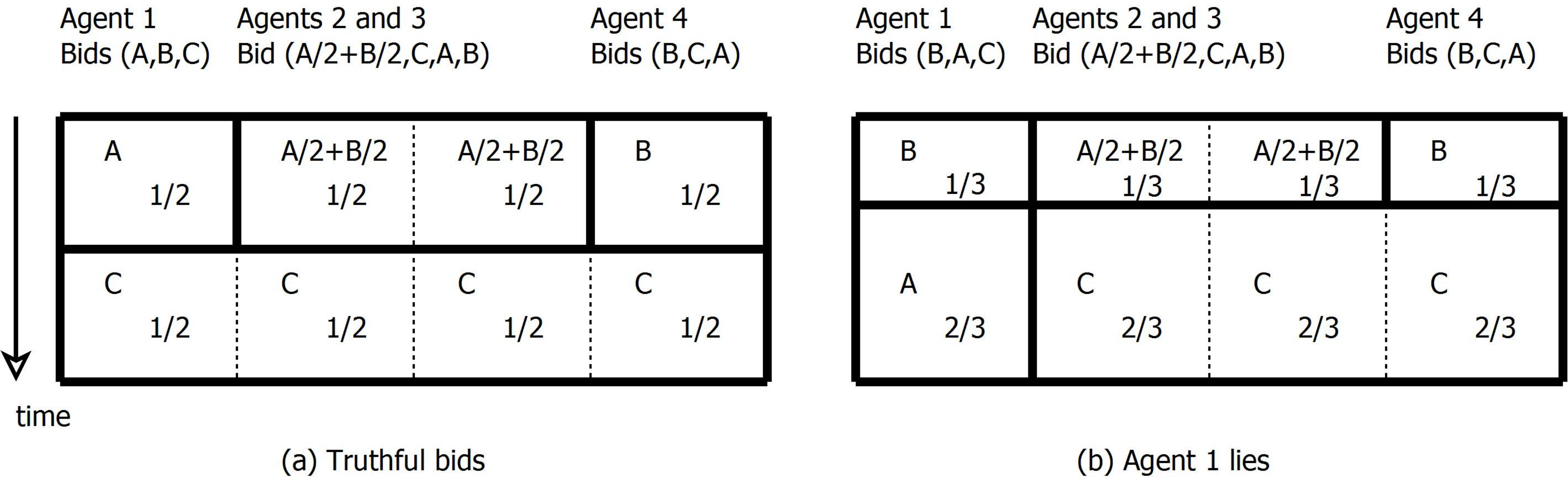}
\caption{Failure of strategy-proofness in the Leontief case}
\label{fig-n4m3}
\end{figure}

\subsection{Group Strategy-Proofness}
A mechanism is \textit{group strategy-proof} against a
family $F$ of subsets of agents if for every ``coalition''
$S \in F$ and for any list of bids by the agents outside of $S$,
the agents of $S$ cannot obtain an improved
allocation by lying, where by ``improved allocation'' we
mean that no agent of $S$ obtains a worse allocation and at
least one obtains a strictly better allocation.

We now provide the following generalization of Theorem~\ref{thm.incentive-compatible}:

\begin{theorem}
\label{thm.gruppen}
\mbox{}
In the non-Leontief case, the SG mechanism is group strategy-proof against the family of subsets
$S$ for which $\min_j q_j \geq \sum_{i \in S} r_i$.
\end{theorem}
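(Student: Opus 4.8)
The plan is to replay the proof of Theorem~\ref{thm.incentive-compatible} with the coalition $S$ in place of the single deviator, and to conclude by exhibiting a member of $S$ whom the joint deviation makes strictly worse off. So let $\min_j q_j\ge\sum_{i\in S}r_i$, let the agents outside $S$ bid truthfully, let each $i\in S$ bid $\sigma_i$, write $\sigma$ for the profile, and assume toward a contradiction that $a^\sigma_{i*}\ge_i a^\pi_{i*}$ for every $i\in S$ and that $a^\sigma_{i_0*}>_{i_0}a^\pi_{i_0*}$ for some $i_0\in S$ (otherwise the deviation is not ``improved'' in the sense defined), so in particular $a^\sigma\ne a^\pi$. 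Write $J$ for the set of goods allocated differently in the two runs. As in Theorem~\ref{thm.incentive-compatible} I would first normalize: for each $i\in S$ push all empty requests to the end of $\sigma_i$, and dispense with the harmless cases in which some member of $S$ already receives a single good in full under truthful play --- that is that member's best possible share, and, since $\min_j q_j\ge\sum_{i\in S}r_i\ge r_i$, it is the only way a member can ever monopolize a good. For each $i\in S$ that sacrifices a good in $\sigma_i$ I would keep track, as in Theorem~\ref{thm.incentive-compatible}, of the first such good $B_i$ and of the good $C_i$ requested just before it; and I would let $B$ be the first good of $J$ to run out in the truthful run, $\tau^\ast=T^\pi_B$, so that every good exhausted strictly before $\tau^\ast$ is allocated identically in both runs.

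The crux, and the step I expect to be the main obstacle, is the analogue of Lemma~\ref{timing-lemma}: $T^\sigma_D\le T^\pi_D$ for every good $D$ with $T^\pi_D<\tau^\ast$. This is exactly where the coalitional hypothesis is needed. The single-deviator proof of Lemma~\ref{timing-lemma} only needs that a truthful consumer of $D$ reaches $D$ no later under $\sigma$; but a member of $S$ can deliberately postpone requesting $D$ in order to leave more of it for a teammate, which a priori permits $T^\sigma_D>T^\pi_D$. To rule this out I would argue by a minimal counterexample $D$ minimizing $T^\pi_D$: by minimality, every good preferred (by any agent) to $D$ and exhausted before $D$ runs out no later under $\sigma$, so every agent outside $S$ with $a^\pi_{iD}>0$ reaches $D$ no later under $\sigma$; and since $D$ runs out before $\tau^\ast$ we have $D\notin J$, so the same total $q_D$ is consumed in both runs. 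Hence the amount of $D$ consumed by time $T^\pi_D$ can fall short of $q_D$ under $\sigma$ only if the members of $S$ withheld $D$ early, and unwinding that against the standing assumption that no member of $S$ is worse off --- a withholding member would have to be compensated on a good it prefers to $D$, which being exhausted before $D$ furnishes a smaller counterexample --- gives the contradiction. The bound $\sum_{i\in S}r_i\le q_D$ is precisely what guarantees that $S$ alone never exhausts $D$ (that would take time $\ge q_D/\sum_{i\in S}r_i\ge1$), so the exhaustion of $D$ under $\sigma$ stays anchored by the truthful agents and the induction closes. Ties among exhaustion times can be broken by an arbitrary fixed refinement, or removed by a limiting argument in the $q_j$.

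Granting the timing lemma, the endgame follows Theorem~\ref{thm.incentive-compatible}, though the bookkeeping is heavier because the loss may land on a different coalition member than the one whose bid caused it. Every agent outside $S$ holding a positive share of $B$ under $\pi$ reaches $B$ no later under $\sigma$ (the goods it prefers to $B$ all run out before $\tau^\ast$), so by time $\tau^\ast$ the agents outside $S$ have together consumed at least as much of $B$ under $\sigma$ as under $\pi$; since all of $q_B$ is consumed by $\tau^\ast$ under $\pi$ and $B\in J$, it follows that some member $i^\ast\in S$ consumes strictly less of $B$ by time $\tau^\ast$ under $\sigma$ than under $\pi$ --- if not, the change to $B$ is a pure internal redistribution among members of $S$, and one passes to the member whose $B$-share decreases and to its first good of decrease. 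For such an $i^\ast$ one has $B_{i^\ast}=B$, and the Proposition-and-Lemma~\ref{insideLem1} bookkeeping from the proof of Theorem~\ref{thm.incentive-compatible} shows that $i^\ast$'s share of every good it prefers to $B$ is unchanged and that $T^\sigma_{C_{i^\ast}}=T^\pi_{C_{i^\ast}}$, whence $i^\ast$ reaches $B$ strictly later under $\sigma$. Since $i^\ast$ does not monopolize $B$, other agents are still receiving $B$ after time $\tau^\ast$ under $\sigma$, so $i^\ast$ cannot make up the deficit; thus $a^\sigma_{i^\ast B}<a^\pi_{i^\ast B}$ while nothing $i^\ast$ prefers to $B$ has changed, so $a^\sigma_{i^\ast*}<_{i^\ast}a^\pi_{i^\ast*}$, contradicting the assumption. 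Hence no coalition $S$ with $\min_j q_j\ge\sum_{i\in S}r_i$ has a profitable joint deviation.
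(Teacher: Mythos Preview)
Your overall strategy---a timing lemma followed by identifying a coalition member who loses on the pivot good $B$---is the paper's strategy. But your choice of pivot is different, and that difference breaks both halves of the argument.

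The paper takes $B$ to be the good with smallest $T^\pi$ among all \emph{sacrificed} goods (sacrificed by any member of $S$). You instead take $B$ to be the first good of $J$ (goods whose allocation differs) to run out under $\pi$. These need not coincide, and the paper's choice is what makes the timing lemma go through. With the paper's $B$, any $D$ with $T^\pi_D<T^\pi_B$ is not sacrificed by \emph{anyone}; hence every agent $i$ with $a^\pi_{iD}>0$ (including members of $S$) has $\sigma_i^{-1}(D)\le|G(i,D)|$, which, together with the inductive hypothesis on earlier-exhausted goods, forces $i$ to request $D$ no later under $\sigma$ than under $\pi$. That is a statement about bids, not about allocations, and it is exactly what the lemma needs.

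Your argument for the timing lemma instead appeals to the ``no one in $S$ is worse off'' hypothesis: a withholding member ``would have to be compensated on a good it prefers to $D$, which being exhausted before $D$ furnishes a smaller counterexample.'' But with your $B$, any $D$ with $T^\pi_D<\tau^\ast$ lies outside $J$, so $a^\sigma_{iD}=a^\pi_{iD}$ for every $i$: there is nothing to compensate, and the lexicographic hypothesis gives you no leverage. A member of $S$ can delay its arrival at $D$ under $\sigma$ simply by bidding, ahead of $D$, some good it does \emph{not} truthfully prefer to $D$; that good need not be exhausted before $D$ under $\pi$, so you do not get a smaller counterexample. The sentence ``The bound $\sum_{i\in S}r_i\le q_D$ is precisely what guarantees that $S$ alone never exhausts $D$'' is true but does not rescue the step; it is the paper's use of this bound in the \emph{endgame} (to ensure $N_B\ne\emptyset$), not in the timing lemma.

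The endgame has the same problem. You assert ``For such an $i^\ast$ one has $B_{i^\ast}=B$,'' i.e.\ that the member who loses on $B$ actually sacrifices $B$. With your $B$ this does not follow: $B$ is merely the first good whose allocation changes, and it need not be sacrificed by anyone. In the paper the set $O_B$ of agents who sacrifice $B$ is nonempty \emph{by construction}, and the Proposition/Lemma~\ref{insideLemi} machinery then shows every $i\in O_B$ reaches $B$ strictly later under $\sigma$, while the hypothesis $\min_j q_j\ge\sum_{i\in S}r_i$ guarantees $N_B\ne\emptyset$ so that $O_B$ cannot recoup its losses after $T^\pi_B$. Switching your pivot to the paper's (earliest-exhausted sacrificed good) makes both the timing lemma and the endgame go through essentially as you sketch.
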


\begin{corol}
\label{old.thm.gruppen}
\mbox{}
In the non-Leontief case, the SG mechanism is group strategy-proof against coalitions of
$\ell$ agents if $\min_j q_j  \geq \max_{S: |S|=\ell}
\sum_{i \in S} r_i$.
\end{corol}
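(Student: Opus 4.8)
The plan is to generalize the argument of Theorem~\ref{thm.incentive-compatible} from one agent to a coalition $S$ with $\sum_{i\in S} r_i \le \min_j q_j$. Fix a coalition $S$, let $\sigma$ be the bid profile in which agents outside $S$ are truthful, and suppose toward a contradiction that the coalition has an \emph{improved} deviation, i.e.\ no agent of $S$ is worse off under $\sigma$ and at least one is strictly better off. The key conceptual step is to identify the analogue of the ``sacrificed good'' $B$: among all goods $j$ such that \emph{some} agent $i\in S$ has $a^\pi_{ij}>0$ but (after the usual normalization deferring empty requests to the tail of $\sigma_i$) requests $j$ strictly later in $\sigma_i$ than in $\pi_i$, choose the one with the smallest truthful-exhaustion time $T^\pi_j$; call it $B$, and fix the witnessing agent $i_0\in S$. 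Since at least one agent strictly improves and allocations are unchanged when nobody ``sacrifices,'' such a $B$ must exist. The bound $\sum_{i\in S} r_i \le \min_j q_j \le q_B$ guarantees that the coalition cannot, by itself, absorb all of good $B$, so some agent outside $S$ also receives a positive amount of $B$ in $\pi$; this is the coalition analogue of the set $N_B$ being nonempty.

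Next I would re-establish the timing lemma (Lemma~\ref{timing-lemma}) in this setting: if $T^\pi_D < T^\pi_B$ then $T^\sigma_D \le T^\pi_D$. The proof is the same downward induction on $T^\pi_D$, but now when we examine an agent $i$ with $a^\pi_{iD}>0$, there are two cases. If $i\notin S$ then $\sigma_i=\pi_i$ and by minimality every good $i$ truly prefers to $D$ has $T^\sigma \le T^\pi$, so $i$ requests $D$ no later in $\sigma$; if $i\in S$, then the choice of $B$ as the \emph{earliest}-exhausted sacrificed good forces every good $i$ truthfully prefers to $D$ (which exhausts before $T^\pi_D<T^\pi_B$) to be requested by $i$ no later in $\sigma$ than in $\pi$ — because that good is either not sacrificed by $i$, or is sacrificed but then would contradict minimality of $T^\pi_B$. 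Hence in both cases $D$ is requested no later, so $T^\sigma_D \le T^\pi_D$, completing the induction. One then gets, exactly as before, that for every agent receiving $B$ in $\pi$ — in particular every agent of $N_B$ and also $i_0$ — the request time for $B$ in $\sigma$ is weakly earlier than in $\pi$; and for $i_0$ it is \emph{strictly} later by the definition of sacrifice, which is the contradiction we will exploit.

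The remaining step is the counting argument that finishes the contradiction. Here is where I expect the main obstacle, and where the coalition setting genuinely differs from the single-agent case: I need that no agent of $S$ who truly prefers $B$ to some good they receive can be ``compensated'' by the deviation, i.e.\ an analogue of the Proposition together with Lemma~\ref{insideLem1} applied to each coalition member. For $i_0$ the argument is as in Theorem~\ref{thm.incentive-compatible}: let $C$ be $i_0$'s most-preferred good with $\pi_{i_0}^{-1}(C)<\pi_{i_0}^{-1}(B)$ and $a^\pi_{i_0}(C)>0$; using the timing lemma and the within-agent monotonicity Lemma~\ref{insideLem1} (whose proof goes through verbatim for $i_0$, since $B$ was chosen to be the sacrificed good witnessed by $i_0$), one shows $a^\sigma_{i_0 j'} = a^\pi_{i_0 j'}$ for all $j'$ with $\pi_{i_0}^{-1}(j')\le\pi_{i_0}^{-1}(C)$, hence $i_0$ enters the competition for $B$ strictly later in $\sigma$ while holding no surplus of earlier goods. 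Now at time $T^\pi_B$: every agent of $N_B$ has received at least as much of $B$ in $\sigma$ as in $\pi$ (a positive amount), and $i_0$ has received strictly less of $B$ in $\sigma$ than in $\pi$; for $i_0$ not to end up worse off he would have to capture \emph{all} of $B$ poured after $T^\pi_B$, which is impossible since $N_B$ is nonempty and those agents keep eating $B$ past $T^\pi_B$. Therefore $a^\sigma_{i_0*} <_1 a^\pi_{i_0*}$, contradicting that the deviation makes no coalition member worse off, which is stronger than what a single deviator would need. The subtlety to watch is that other members of $S$ might shift good $B$'s exhaustion time in $\sigma$ in ways a single deviator could not; but the timing lemma already controls $T^\sigma_B$ from above in terms of the \emph{arrival profile} at $B$, and the fact that $N_B\neq\emptyset$ (guaranteed by $q_B \ge \sum_{i\in S} r_i$) is exactly what keeps the coalition from monopolizing $B$ after $T^\pi_B$ — so the single-agent endgame survives essentially intact. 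Corollary~\ref{old.thm.gruppen} is then immediate, since $|S|=\ell$ and $\sum_{i\in S} r_i \le \max_{|S'|=\ell}\sum_{i\in S'} r_i \le \min_j q_j$ places $S$ in the family covered by Theorem~\ref{thm.gruppen}.
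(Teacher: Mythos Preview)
Your plan follows the paper's approach closely and correctly identifies the key ingredients: the sacrificed good $B$ with minimal $T^\pi_B$, the timing lemma, and the Proposition that allocations of goods truly preferred to $B$ are unchanged. However, there is a genuine gap in your endgame.

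You fix a \emph{single} witnessing agent $i_0$ who sacrifices $B$, but several agents in $S$ may sacrifice $B$; call this set $O_B$. Your assertion that ``for $i_0$ not to end up worse off he would have to capture \emph{all} of $B$ poured after $T^\pi_B$'' does not follow. The amount of $B$ remaining at time $T^\pi_B$ in the $\sigma$-process is bounded above by the \emph{total} deficit $\sum_{i\in O_B}(a^\pi_{iB}-x_i)$, where $x_i$ is what $i$ has received of $B$ by that time in $\sigma$. When $|O_B|>1$ this can exceed $i_0$'s individual deficit, so $i_0$ may well recover his own shortfall --- at the expense of some other member of $O_B$, who then ends up with $a^\sigma_{iB}<a^\pi_{iB}$ instead. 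Thus it need not be $i_0$ who is made worse off, and the sentence ``the single-agent endgame survives essentially intact'' is too optimistic.

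The paper closes this gap by reasoning about $O_B$ collectively rather than about a single witness. The Proposition (and the ``strictly later request time for $B$'' consequence) applies to \emph{every} $i\in O_B$, since each such $i$ sacrifices $B$ and $B$ was chosen with minimal $T^\pi_B$ among all sacrificed goods. Hence at time $T^\pi_B$ every agent of $O_B$ has received strictly less of $B$ in $\sigma$ than in $\pi$, while every agent of $N_B$ has received at least as much. Since $N_B\neq\emptyset$ and its members continue eating $B$ after $T^\pi_B$, the set $O_B$ cannot collectively capture all of the remaining $B$; therefore $\sum_{i\in O_B} a^\sigma_{iB} < \sum_{i\in O_B} a^\pi_{iB}$, and \emph{some} $i\in O_B$ has $a^\sigma_{iB}<a^\pi_{iB}$. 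Combined with the Proposition for that particular $i$, this gives $a^\sigma_{i*} <_i a^\pi_{i*}$, the desired contradiction. The fix is small --- replace $i_0$ by $O_B$ in the last paragraph --- but without it your argument does not close.
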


The proof of Theorem~\ref{thm.gruppen} follows a structure similar to that of Theorem~\ref{thm.incentive-compatible} but the argument is complicated by the fact that 
 different agents in $S$
can sacrifice different goods, and some of the agents may actually be
better off due to their untruthful bids (as they may benefit
from the interactions among the several lies). The proof needs to
effectively ``chase through'' an unbounded iteration of good transfers
relative to $a^\pi$, and show that \textit{some}
agent in the coalition is worse off than in
$\pi$. Fortunately, this can be done without explicitly pursuing the iteration.

\begin{proof}
Let $S$ be a minimal counterexample. That is,
\begin{itemize}
\item[(a)]
$\min_j q_j \geq
\sum_{i \in S} r_i$;
\item[(b)] With $\pi_i$ representing in this proof the truthful preferences
for $i \in S$ and arbitrary preferences for $i \notin S$, there are
bids $\sigma_i$ for $i \in S$ such that every $i \in S$ ``is a willing
participant in the coalition $S$'', namely (with $\sigma_\ell=\pi_\ell$ for
$\ell \notin S$) $a^\sigma_{i*} \geq_i a^\pi_{i*}$;
\item[(c)] For some $i \in
S$, $a^\sigma_{i*} >_i a^\pi_{i*}$;
\item[(d)] No strict subset of $S$
satisfies (a),(b),(c).
\end{itemize}

Note by minimality that in $\sigma$, every agent $i \in S$ bids untruthfully
(differently from $\pi$) and this has an effect, namely, if $i$
reverts to bidding according to $\pi$ then the allocation is different
than in $\sigma$.

\suppress{
Let $S$ be a coalition as in the theorem statement. We need
to show there is no list of bids for the agents in $S$ such
that all do at least as well as in $\pi$, and some do
strictly better.
}

If $a^\pi_{i\pi_i(1)}=r_i$ for all $i \in S$, that is, with
truthful bids these agents receive only their top choices,
then none of them can be strictly rewarded by submitting a
different bid.

Otherwise (i.e., if $a^\pi_{i\pi_i(1)}<r_i$ for some $i \in S$),
then thanks to the hypothesis, under the truthful bids
$\pi$, every good has a positive allocation outside $S$.

We may simplify the argument slightly by supposing
that for each agent $i \in S$, if $a^\sigma_{ij}=0$ and
$a^\sigma_{ij'}>0$, then $\sigma_i^{-1}(j)>\sigma_i^{-1}(j')$. In
other words, all the requests that come up empty may as
well be deferred to the end.

Let $G(i,j)=\{j': \pi_i^{-1}(j') \leq
  \pi_i^{-1}(j)$ and $a^\pi_{ij'}>0\}$.

Say that agent $i$ \textit{sacrifices} good $j$ in $\sigma$ if:
\begin{enumerate}
\item \label{sac1} $a^\pi_{ij}>0$,
\item \label{sac2} $\sigma_i^{-1}(j)>\left|G(i,j)\right|$, and
\item \label{sac3}
$\pi_i^{-1}(j)<\pi_i^{-1}(j')$ if $j'$ also satisfies
  (\ref{sac1}),(\ref{sac2}).
\end{enumerate}
Some good must be sacrificed by some agent, since otherwise the
allocation will not change. (However, while every agent in
$S$ is untruthful, not every $i \in S$ necessarily sacrifices a
good; setting $\sigma_i(j)>\pi_i(j)$ might have an effect
even if $a^\pi_{ij}=0$ because of increased availability of
$j$ due to bidding changes of other agents.)

Of all the sacrificed goods let $B$ be one for which $T^\pi_B$
is minimal.

\begin{lemma} If $D$ is a good and
  $T^\pi_D<T^\pi_B$, then $T^\sigma_D \leq
  T^\pi_D$. \end{lemma}
\begin{proof}
Supposing the contrary, let $D$ be a counterexample
minimizing $T^\pi_D$.
By the minimality of $B$, $D$ cannot be a sacrificed good.

Now let $i$ be any agent (inside or outside of $S$) for whom
$a^\pi_{iD}>0$. Due to the minimality of $D$, each of the
goods $j$ which $i$ truthfully prefers to $D$, has
$T^\sigma_j \leq T^\pi_j$.  Therefore $i$ requests $D$ at a
time in $\sigma$ that is at least as soon as the time $i$
requests it in $\pi$.

Since this holds for all $i$ who
received a positive allocation of $D$ in $\pi$, the lemma
follows.
\end{proof}

Let $O_B \subseteq S$ be the set of agents who sacrifice
$B$, and let $N_B$ be the set of agents $i$ for whom
$a^\pi_{iB}>0$ but who do not sacrifice $B$. Due to the
lemma, for each agent in $N_B$, the request time
for $B$ in $\sigma$ is weakly earlier than it is in
$\pi$. Now consider an agent $i \in O_B$.  Let $C$ be the
good such that $\pi_i^{-1}(C)$ is maximal subject to
$\pi_i^{-1}(C)<\pi_i^{-1}(B)$ and $a^\pi_{iC}>0$. Due to the
lemma, all goods $j'$ such that $\pi_i^{-1}(j')\leq
\pi_i^{-1}(C)$ have $T^\sigma_{j'} \leq T^\pi_{j'}$. Next we
show:
\begin{prop}
If $\pi_i^{-1}(j')\leq \pi_i^{-1}(C)$, then
$a^\sigma_{ij'}=a^\pi_{ij'}$. \end{prop}
\begin{proof} Supposing the contrary, let
$\pi_i^{-1}(j')$ be minimal such that $\pi_i^{-1}(j')\leq
  \pi_i^{-1}(C)$ and
$a^\sigma_{ij'}\neq
a^\pi_{ij'}$. There are two possibilities to consider.

(a) $a^\sigma_{ij'} < a^\pi_{ij'}$. This is not possible
because $i$ is a willing participant in the coalition.

(b) $a^\sigma_{ij'} > a^\pi_{ij'}$. Note:
\begin{lemma} Let $j_1,j_2$ be such that
$\pi_i^{-1}(j_1)\leq \pi_i^{-1}(B)$, $\pi_i^{-1}(j_2)\leq
\pi_i^{-1}(B)$, $a^\pi_{ij_1}>0$, and
$\pi_i^{-1}(j_1)<\pi_i^{-1}(j_2)$. Then
$\sigma_i^{-1}(j_1)<\sigma_i^{-1}(j_2)$.  \label{insideLemi}
\end{lemma}
\begin{proof} Identical to the proof of
  Lemma~\ref{insideLem1} with agent $i$ in place of agent
  $1$.
\end{proof}
It follows that $T^\sigma_{j'} \geq \sum_{j'':
  \pi_i^{-1}(j'') \leq   \pi_i^{-1}(j')} a^\sigma_{ij''}$.
Due to the minimality of $j'$, this means that if  $a^\sigma_{ij'} >
a^\pi_{ij'}$, then $T^\sigma_{j'} >T^\pi_{j'}$,
  contradicting our earlier conclusion. This completes
  demonstration of the Proposition. \end{proof}
A consequence of the Proposition is that $T^\sigma_C=T^\pi_C$.

Since agent $i$ sacrifices $B$, his request time for $B$ in
$\sigma$ is strictly greater than his request time
for $B$ in $\pi$.

Since we are in the case that every good has a positive
allocation outside $S$, $N_B$ is nonempty. At time
$T^\pi_B$, the agents of $N_B$ have received as least as
much of $B$ in $\sigma$ as they have in $\pi$, and the
latter is positive. On the other hand, at the same time
$T^\pi_B$, the agents of $O_B$ have received strictly less
of $B$ in $\sigma$ than they have in $\pi$. In order for the
agents of $O_B$ to receive collectively at least as much of
$B$ in $\sigma$ as in $\pi$, they would have to receive all
of $B$ that is allocated after time $T^\pi_B$; however, that
is not possible, because the set of agents receiving $B$
after $T^\pi_B$ includes $N_B$. Therefore there is some $i
\in O_B$ for whom $a^\sigma_{iB} < a^\pi_{iB}$. This
contradicts the requirement that $i$ be a willing
participant in the coalition $S$.
\end{proof}

\begin{example} \label{noGpStratPf}
Example~\ref{noStratPf}, in which strategy-proofness failed
absent the hypothesis of
Theorem~\ref{thm.incentive-compatible}, can be extended in a
straightforward manner to one in which the group
strategy-proof property fails to hold absent the hypothesis
of Corollary~\ref{old.thm.gruppen}. Again use $m=3$, but
instead of two agents, use $n=2\ell$ agents, the first half
having the same preference order $(A,B,C)$ as agent $1$ in
the earlier example, and the second half having the same
preference order $(B,C,A)$ as agent $2$ in the earlier
example. If all agents bid truthfully, then the first $\ell$
agents each receive the sorted allocation $(1, 0, 1/2)$;
however if they lie and bid $(B,A,C)$, while the remainder
bid truthfully, then each lying agent receives the improved
sorted allocation $(1, 1/2, 0)$.
\end{example}

\section{Characterizing All Pareto Efficient Allocations}

Bogomolnaia and Moulin~\cite{BM01} extended their mechanism
by allowing players to receive
goods at time-varying rates. Specifically, for each agent
$i$ there is a speed function $\eta_i$ mapping
the time interval $[0,1]$ into the nonnegative reals, such
that for all $i$
\[ \int_0^1 \eta_i(t) \; dt = r_i. \]
Subject to these speeds, goods flow to agents in order of
the preference lists they bid, just as before. They showed
that this extension characterizes all ordinally efficient
allocations. 

In this section, we obtain an analogous
characterization of all Pareto efficient allocations by a
similar extension of our mechanism. Specifically, we prove
that for \textit{any} Pareto efficient allocation of bundles,
there exist speeds such that the extended SG mechanism produces
that allocation. We prove this after first noting that the
extended SG mechanism always results in Pareto efficient
allocations.

In this section when $\eta_i$ ($1 \leq i \leq n$) are fixed, we let
$a^\pi$ (with the $\eta$'s implicit) be the goods allocation produced by
the extended SG mechanism with these speeds and truthful bids. We let $l^\pi=L^\pi(a^\pi)$ be the corresponding allocation of bundles.

\subsection{Pareto Efficiency}
\begin{theorem} Let $\eta_i$, $1 \leq i \leq n$, be any
  speed functions. Then the allocation $l^\pi$ is
  Pareto efficient.
\end{theorem}
\begin{proof} The argument is the same as for Theorem~\ref{thm.Pareto.newproof} with the proviso that the definition 
$t_{iK}= \frac{1}{r_i} \sum_{k=1}^K l^\pi_{i \pi_i(k)}$
is replaced by 
$t_{iK}= \inf\{y: 
\int_0^y \eta_i(t) \; dt \geq \sum_{k=1}^K l^\pi_{i \pi_i(k)} \}$.
\end{proof}

\subsection{Characterizing All Pareto Efficient Allocations}
If the last result mirrored the First Welfare Theorem, the
next mirrors the Second Welfare Theorem:
\begin{theorem} Let $\pi$ be the collection of agent
  preference lists over bundles, and let $l$ be a Pareto efficient
allocation. There exist speed functions $\eta_i$,
  $1 \leq i \leq n$, such that $l=l^\pi$.
\end{theorem} \begin{proof} As before the bundles are $(\lambda^k)_{k=1}^M$, where for each $k$, $\sum_{j=1}^m \lambda^k_j = 1$, and $\lambda^k_j \geq 0$ for all $j$.

Construction of the speeds $\eta_i$ is simple. 
Let a ``partial bundle
allocation'' be a list $\hat{l}_{ik}$, each $\hat{l}_{ik} \geq 0$, such that for every $i$, $\sum_{k,j}
\hat{l}_{ik} \lambda^k_j \leq r_i$, and for every $j$, $\sum_{i,k} \hat{l}_{ik} \lambda^k_j  \leq q_j$.

Initialize $t=0$ and initialize each agent $i$ with the
empty partial allocation $\hat{l}_{ik}=0$ for all $i,k$.

Initialize $c_j$ to be the quantity of good $j$ that is allocated in $l$. (Necessarily $c_j \leq q_j$ and $\sum c_j=\sum r_i$. 
If $\sum q_j > \sum r_i$ then for some $j$, $c_j < q_j$.)

Then repeat the following until $t=1$.

Find an agent $i$ for whom there is an $\ell$ such that
$\hat{l}_{i\pi_i(\ell)}<l_{i\pi_i(\ell)}$, and such that for
all $\ell'<\ell$, the bundle $\pi_i(\ell')$ has been exhausted (that is to say, 
 there is a good $j$ such that $\lambda^{\pi_i(\ell')}_j>0$ and $c_j=0$.)
To see that there is such an $i$, suppose the contrary, and consider all the agents for whom 
 $\sum_{k,j}
\hat{l}_{ik} \lambda^k_j < r_i$. For each of them there is a favorite bundle which has not yet been exhausted. Evidently none of these agents is to be allocated in $l$ any additional quantity of this favorite bundle. However since these favorite bundles have not yet been exhausted, we can allocate to every player a slight additional positive amount of his favorite unexhausted bundle, without exhausting any additional goods. Any extension of this new partial bundle allocation to a full bundle allocation, strictly Pareto dominates $l$, contrary to assumption.

Now set $\delta=
(l_{i\pi_i(\ell)}-\hat{l}_{i\pi_i(\ell)})/\sum r_i$.  For $t
< t' < t+\delta$, make the settings $\eta_i(t')=\sum r_i$
and, for $i' \neq i$, $\eta_{i'}(t')=0$. Then increment
$\hat{l}_{i\pi_i(\ell)}$ by $\delta \sum r_i$, and decrement each $c_j$ by the corresponding amount, namely, decrement $c_j$ by
$\lambda^{\pi_i(\ell)}_j \delta \sum r_i$. 
 Finally, increment $t$
by $\delta$.

\suppress{ 
This process can only fail to complete if there comes a time
$t$ at which every agent $i$ falls into one of the following
sets $S_1$ and $S_2$, and $S_1$ is nonempty:
\begin{enumerate}
\item $S_1= \{i$ such that $\sum_j \alpha_{ij} < r_i$, and
the minimal $\ell$ for which $c_{\pi_i(\ell)}>0$ also
satisfies $\alpha_{i\pi_i(\ell)}=a_{i\pi_i(\ell)} \}$.
\item $S_2= \{i$ such that $\sum_j \alpha_{ij} = r_i\}$.
\end{enumerate}
If there is such a $t$, then
for each $i \in S_1$, define $\ell_i$ to be the $\ell$
identified in the definition of $S_1$.
Then for each $i \in S_1$, a small positive amount of the good
$\pi_i(\ell_i)$
can be added to $i$'s current partial allocation; the new
partial allocation, no matter how it is completed to an
allocation, improves strictly on $a$ for all $i \in S_1$,
and is unchanged for $i \in S_2$. Therefore $a$ is not
Pareto efficient.
}
This process terminates in finitely many iterations because in each iteration some agent completes its allocation of some bundle.
\end{proof}

\suppress{ 
\begin{theorem} Let $\pi$ be the collection of agent
  preference lists and let $a$ be a Pareto efficient
allocation. There exist speed functions $\eta_i$,
  $1 \leq i \leq n$, such that $a=a^\pi$.
\end{theorem}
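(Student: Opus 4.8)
The plan is to reduce the statement to one combinatorial fact---acyclicity of a ``precedence'' digraph built from $a$---and then to extract that fact from Pareto efficiency.

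I would first observe that, given the tuple of \emph{exhaustion times} $0<T_j\le 1$ of the goods (with $\max_j T_j=1$), the schedule of which good each agent consumes at each moment under the truthful bids $\pi$ is completely determined: at time $t$ agent $i$ consumes the $\pi_i$-highest good $j$ with $T_j>t$, so agent $i$ consumes a sequence of goods $\pi_i(1)=j^{(1)}_i,j^{(2)}_i,\dots$ along which the $T$-values strictly increase, eating $j^{(k)}_i$ exactly on the interval $[T_{j^{(k-1)}_i},T_{j^{(k)}_i}]$ (with $T_{j^{(0)}_i}:=0$). Since a speed function may vanish on a subinterval, one then has complete freedom to realize any prescribed nonnegative amount on each such interval; hence, \emph{provided every good $j$ with $a_{ij}>0$ occurs in agent $i$'s sequence}, taking $\eta_i$ piecewise-constant---equal to $a_{i, j^{(k)}_i}/(T_{j^{(k)}_i}-T_{j^{(k-1)}_i})$ on the $k$-th interval and $0$ after agent $i$'s last good---yields $\int_0^1\eta_i=\sum_j a_{ij}=r_i$, makes each good $j$ genuinely run out at time $T_j$ (some agent with $a_{ij}>0$ consumes it at positive rate just before $T_j$), and produces exactly the allocation $a$. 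So it is enough to find times $T_j$ such that, for every $i$ and every $j$ with $a_{ij}>0$, one has $T_{j'}<T_j$ whenever $\pi_i^{-1}(j')<\pi_i^{-1}(j)$.

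Next I would recast this as acyclicity. Let $G$ be the digraph on the goods with an arc $j'\to j$ whenever some agent $i$ has $a_{ij}>0$ and prefers $j'$ to $j$ (i.e.\ $\pi_i^{-1}(j')<\pi_i^{-1}(j)$). The displayed requirement on the $T_j$ is exactly the existence of a real-valued potential on the vertices of $G$ that strictly increases along every arc, and this exists if and only if $G$ is acyclic; given acyclicity, take any linear extension of $G$ and set $T_j$ to be the rank of $j$ in that extension, divided by $m$.

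The heart of the proof---and the step I expect to be the only genuine obstacle---is to show that $G$ is acyclic, which is where Pareto efficiency of $a$ enters. Assume for contradiction a simple cycle $j_1\to j_2\to\cdots\to j_\ell\to j_1$ (indices mod $\ell$), and pick for each $k$ a witnessing agent $i_k$ with $a_{i_k,j_{k+1}}>0$ and $\pi_{i_k}^{-1}(j_k)<\pi_{i_k}^{-1}(j_{k+1})$. I would form the cyclic $\varepsilon$-trade in which, for each $k$, agent $i_k$ gives up $\varepsilon$ units of its less-preferred good $j_{k+1}$ (legal for small $\varepsilon$, since $a_{i_k,j_{k+1}}>0$) and receives $\varepsilon$ units of $j_k$; since each good $j_k$ is given up by $i_{k-1}$ and received by $i_k$, this is again a valid allocation $b\neq a$ for small $\varepsilon$. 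For any participating agent $i$, every loss contribution the trade makes to its share (at a coordinate $\pi_i^{-1}(j_{k+1})$ for some $k$ with $i_k=i$) is paired with a gain contribution at the strictly smaller coordinate $\pi_i^{-1}(j_k)$; hence the smallest coordinate of $i$'s sorted allocation touched by the trade can receive only gains, its net change there is strictly positive, and being the smallest touched coordinate it is also the leftmost changed coordinate---so $i$ strictly prefers $b$ to $a$, even if $i$ plays several of the roles $i_k$. Then every participant strictly prefers $b$ and no one is worse off, contradicting Pareto efficiency of $a$. Therefore $G$ is acyclic, and the construction above delivers speed functions $\eta_i$ with $a^\pi=a$.
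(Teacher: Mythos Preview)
Your proof is correct and takes a genuinely different route from the paper's.

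The paper argues constructively and sequentially: it repeatedly selects one agent $i$ who can make progress---meaning there is a good $\pi_i(\ell)$ with $\alpha_{i\pi_i(\ell)}<a_{i\pi_i(\ell)}$ and all $\pi_i$-better goods already exhausted---gives that agent full speed $\sum q_j$ (and everyone else speed $0$) until $i$'s share of $\pi_i(\ell)$ reaches the target, then iterates. If the process ever stalls while some agent is unsatisfied, it exhibits a direct Pareto improvement by handing each stalled agent a sliver of its currently most-preferred available good. Your argument is instead structural: you encode the constraints on exhaustion times as a precedence digraph on goods (arc $j'\to j$ whenever some $i$ with $a_{ij}>0$ prefers $j'$ to $j$), show that a cycle in this digraph yields a Pareto-improving cyclic $\varepsilon$-trade, and then read off the $T_j$'s from any topological sort and set piecewise-constant speeds.

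What each approach buys: the paper's construction is immediately algorithmic and yields the polynomial-time Pareto-efficiency test as a direct corollary of the procedure itself. Your approach isolates the combinatorial heart of the matter---acyclicity of the precedence digraph---and connects it to the familiar ``no improving trading cycle'' picture of efficiency; it is more conceptual, and incidentally also gives the polynomial-time test (just check whether $G$ has a cycle). One small point worth making explicit in your write-up is the inductive verification that, with your chosen $\eta_i$, the \emph{actual} exhaustion times produced by the mechanism coincide with your intended $T_j$'s; you gesture at this with the parenthetical about positive rate just before $T_j$, and the induction is routine, but it is the one place a reader might pause.
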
 \begin{proof}
Construction of the $\eta_i$ is simple. Let a ``partial
allocation'' be $\alpha_{ij} \geq 0$ such that $\sum_j
\alpha_{ij} \leq r_i$ and $\sum_i \alpha_{ij} \leq q_j$.

Initialize $t=0$ and initialize each agent $i$ with the
empty partial allocation
$\alpha_{i1}=\ldots=\alpha_{im}=0$. Initialize also
$c_j=q_j$ for all $j$. Then repeat the following until all
$t=1$.

Find an agent $i$ for whom there is an $\ell$ such that
$\alpha_{i\pi_i(\ell)}<a_{i\pi_i(\ell)}$, and such that for
all $\ell'<\ell$, $c_{\pi_i(\ell')}=0$. Set $\delta=
(a_{i\pi_i(\ell)}-\alpha_{i\pi_i(\ell)})/\sum q_j$.  For $t
< t' < t+\delta$, make the settings $\eta_i(t')=\sum q_j$
and, for $i' \neq i$, $\eta_{i'}(t')=0$. Then increment
$\alpha_{i\pi_i(\ell)}$ by $\delta \sum q_j$, and decrement
$c_{\pi_i(\ell)}$ by the same amount. Finally, increment $t$
by $\delta$.

This process can only fail to complete if there comes a time
$t$ at which every agent $i$ falls into one of the following
sets $S_1$ and $S_2$, and $S_1$ is nonempty:
\begin{enumerate}
\item $S_1= \{i$ such that $\sum_j \alpha_{ij} < r_i$, and
the minimal $\ell$ for which $c_{\pi_i(\ell)}>0$ also
satisfies $\alpha_{i\pi_i(\ell)}=a_{i\pi_i(\ell)} \}$.
\item $S_2= \{i$ such that $\sum_j \alpha_{ij} = r_i\}$.
\end{enumerate}
If there is such a $t$, then
for each $i \in S_1$, define $\ell_i$ to be the $\ell$
identified in the definition of $S_1$.
Then for each $i \in S_1$, a small positive amount of the good
$\pi_i(\ell_i)$
can be added to $i$'s current partial allocation; the new
partial allocation, no matter how it is completed to an
allocation, improves strictly on $a$ for all $i \in S_1$,
and is unchanged for $i \in S_2$. Therefore $a$ is not
Pareto efficient.
\end{proof}
}

Examination of the above proof reveals:
\begin{corol}
There is a polynomial time algorithm for checking whether a given
allocation is Pareto efficient.
\end{corol}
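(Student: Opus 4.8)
The plan is to observe that the construction in the proof of the preceding theorem already \emph{is} the algorithm, and then to bound its cost. Given the allocation $a$ and the true preference lists $\pi$, I would run that construction verbatim: start from the empty partial allocation, and repeatedly pick an \emph{eligible} pair $i,\ell$ --- one with $\alpha_{i\pi_i(\ell)}<a_{i\pi_i(\ell)}$ and $c_{\pi_i(\ell')}=0$ for all $\ell'<\ell$ --- and raise $\alpha_{i\pi_i(\ell)}$ to $a_{i\pi_i(\ell)}$ (recording the corresponding speed settings). The run terminates in exactly one of two ways, and each gives a definitive verdict. If it reaches $t=1$, it has exhibited speed functions $\eta_i$ with $a=a^\pi$, so $a$ is Pareto efficient by the Pareto-optimality theorem for the extended SG mechanism. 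If instead it reaches a configuration in which every agent lies in $S_1\cup S_2$ with $S_1\neq\emptyset$, then the argument already given produces a partial allocation that strictly improves on $a$ for every agent of $S_1$ while leaving $S_2$ unchanged, so $a$ is not Pareto efficient. Hence correctness is immediate from the preceding proof; as a by-product this also shows the verdict does not depend on which eligible pair is advanced, so any fixed deterministic tie-breaking rule may be used.

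The substantive work is the running-time bound. First I would note that throughout the run $\alpha_{ij}\le a_{ij}$ componentwise (each $\alpha_{ij}$ is only ever set, in one step, to $a_{ij}$), and that each iteration moves some $\alpha_{i\pi_i(\ell)}$ from strictly below $a_{i\pi_i(\ell)}$ to exactly $a_{i\pi_i(\ell)}$; after that the cell $(i,\pi_i(\ell))$ is permanently ineligible, and only cells with $a_{ij}>0$ are ever touched, so there are at most $mn$ iterations. Then I would show each iteration costs $O(mn)$: for each agent $i$, walk down $\pi_i$ to the first good $\pi_i(\ell^{\ast})$ with $c_{\pi_i(\ell^{\ast})}>0$ and test whether $\alpha_{i\pi_i(\ell^{\ast})}<a_{i\pi_i(\ell^{\ast})}$ --- this $\ell^{\ast}$ is the only candidate, because every earlier good on $i$'s list has $c=0$ and hence (by $\alpha\le a$ together with $\sum_{i'}\alpha_{i'j}=q_j=\sum_{i'}a_{i'j}$) already has $\alpha=a$ there. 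That is $O(m)$ per agent, $O(mn)$ per iteration, and the same scan, when it finds no eligible agent, certifies the terminal ``stuck'' configuration. The updates to $\alpha$, $c$, $t$ are $O(1)$. Thus the procedure performs $O(m^{2}n^{2})$ arithmetic operations on numbers that are sums and differences of the inputs $a_{ij},q_j,r_i$, hence of polynomial bit-length, so it runs in polynomial time.

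I do not expect a real obstacle here: the only step requiring any care is the iteration bound, which rests entirely on the remark that a cell, once raised to its target value, can never be selected again. With that in hand, both correctness and the polynomial running time are read off the preceding proof.
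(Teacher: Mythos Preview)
Your proposal is correct and is precisely what the paper intends: the paper's entire proof of this corollary is the single phrase ``Examination of the above proof reveals,'' and you have carried out exactly that examination, running the construction from the preceding theorem as a decision procedure and reading off both outcomes. Your explicit running-time analysis (at most $mn$ iterations because each cell is raised to its target value once and never revisited, $O(mn)$ work per iteration to locate an eligible pair) is a welcome addition that the paper leaves implicit.
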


\subsection{No Incentive Compatibility for the Variable Speeds
  Variant}

We note that the synchrony imposed among agents by the SG
mechanism is key to its incentive compatibility and
envy-freeness properties (indeed, the properties hold even
if the basic mechanism is extended with the {\em same} speed
function for all agents). If different agents have
different speed functions under the extended SG mechanism,
Theorems \ref{thm.incentive-compatible} and
\ref{thm.gruppen}, showing incentive compatibility, fail to
hold. The argument breaks down as soon as it uses
termination times, in Lemma~\ref{timing-lemma}.  Below is a
counter-example for strategy-proofness; a similar idea gives
counter-examples for group strategy-proofness and
envy-freeness.

\begin{example} Assume
$m=n=4$ and that all $r_i=q_j=1$. Let the speed function for
agent $1$ be $1$ over the interval $[0, 1]$. The speeds of
agents 2, 3, and 4 equal $1$ over the interval $[0, 1/2]$,
$0$ over the interval $(1/2,5/6]$, and $3$ over the interval
$(5/6,1]$. The preference orders of agents $1$ and $2$ are
$(1, 2, 3, 4)$, and the preference orders of agents $3$ and
$4$ are $(2, 4, 3, 1)$. If all agents bid truthfully, agent
$1$ receives the sorted allocation $(1/2, 0, 1/2, 0)$. On
the other hand, if agent $1$ bids $(2, 1, 3, 4)$ while the
rest bid truthfully, then agent $1$ receives the better
sorted allocation $(1/2, 1/3, 1/6, 0)$. \end{example}

\section{Envy-Freeness w.r.t.\ stochastic dominance preference}
(This section is the only part of the paper where we use sd preference.)

Given a bundle allocation $l$, let $\bar{l}$ denote the
\textit{relative allocation,} where
$\bar{l}_{ij} = l_{ij}/r_i$.
\begin{theorem} Under truthful bidding, every agent $i$ weakly
sd-prefers his relative allocation $\bar{l}^\pi_{i*}$ to the
relative allocation $\bar{l}^\pi_{i'*}$ of any other agent $i'$. \end{theorem} 
\begin{proof} Fix any $1 \leq k \leq M$. We are to show that
\[ \frac{1}{r_i} \sum_{\ell=1}^k l^\pi_{i\pi_{i}(\ell)} \geq \frac{1}{r_{i'}}
\sum_{\ell=1}^k l^\pi_{i'\pi_{i}(\ell)} .\]
Let $t$ be the time at which the last of the bundles $\pi_i(1),\ldots,\pi_i(k)$ is exhausted. So $t r_i=\sum_{\ell=1}^k l^\pi_{i\pi_{i}(\ell)}$.
No other agent
can receive any of these bundles after time $t$, so $t r_{i'} \geq
\sum_{\ell=1}^k l^\pi_{i'\pi_{i}(\ell)}$.
\end{proof}

\section{Other Greedy Mechanisms} 
\label{counter}

As stated in the Introduction, obtaining an efficient and envy-free non-pricing 
mechanism for allocating divisible goods is easy, but additionally satisfying incentive compatibility 
is harder. In this section we present two greedy mechanisms which satisfy the first two properties
but not the third. To simplify description of the mechanisms, assume that $m=n$ and that all $r_i=q_j=1$; it is straightforward to generalize the mechanisms beyond this restriction, and our counterexamples are possible even with it.

{\bf Mechanism 1:} The mechanism proceeds iteratively. In round  $i$, it considers the $i$th-favorite goods
of all agents who still have not been allocated a full unit of goods. Among such agents, if the $i$th-favorite good of a set $S$
of agents is good $j$, the remaining quantity of good $j$ is allocated equally among the agents in $S$, subject to no agent getting more than a total of one unit of goods. (Some of good $j$ may remain after the round.)

{\bf Mechanism 2:} The mechanism has a notion of time, similar to SG. Goods allocation starts at time $0$ and is completed at time $1$. During this interval each agent receives goods at rate $1$. The interval is punctuated by finitely many critical instants at which some of the agents switch which good they are receiving. The first critical instant is $0$ and the others are the times at which some nonempty set of agents $T$ finishes receiving their promised allocation of a good. At such an instant, the mechanism identifies, for each of the agents in $T$, the next-favorite good on their list that has not yet been fully promised to other agents. The mechanism promises each agent in $T$ some of that good, in the following fashion: let $T_j$ be the subset of $T$ requesting good $j$ and let $u$ be the amount of good $j$ that has not been previously promised. Then each agent in $T_j$ is promised an equal share of $u$ subject to no agent exceeding a total of one unit of goods. (The next critical instant affecting these agents is of course easily computed.) The mechanism then proceeds to the next critical instant.

The proofs given above, for showing that the SG mechanism is efficient and envy-free, extend easily to showing that Mechanisms
1 and 2 are also efficient and envy-free. Here, however, are counterexamples to incentive compatibility:

\begin{example} \textbf{Mechanism 1:} Let
$m=n=4$; name the goods $A, \ldots, D$.
Agent 1's preference list is $A,B,C,D$; 
agents $2$ and $3$ have preferences $A,C,B,D$; and agent $4$'s favorite good is $B$. If the agents bid truthfully then in round $1$, agent $4$ is allocated all of good $B$, while the first three agents are each allocated a third of good $A$. In the second round agent $1$ is left out while agents $2$ and $3$ are allocated half of good $C$. In round $3$ no allocations are made, and in round $4$ good $D$ is allocated among the first three agents. The allocation to agent $1$ is therefore $(A: 1/3, D: 2/3)$. If instead agent $1$ submits the preference list $A,C,B,D$ then she is treated the same as agents $2$ and $3$, and her allocation is $(A:1/3, C: 1/3, D: 1/3)$, which she prefers. \end{example}

The counterexample for the second mechanism is more involved.

\suppress{
\begin{example} \textbf{Mechanism 2:} \ljs{Bug in this one:} Let
$m=n=8$; name the goods $A, \ldots, H$. We specify only the essential components
of the preference orders.
The preference order of agent $1$ is $(A, D, C, H, \ldots)$, of agent 5 is $(B, D, \ldots)$ of agents 2, 3, 4 is
$(A, F, G, H, \ldots)$, and of agents 6, 7, 8 is $(B, C, E, F, G, H, \ldots)$.
If all agents report their preferences truthfully, then agent 1 gets the allocation $1/4$ each of $A$ and $D$, \ljs{Actually he gets $1/2$ of $D$} and $1/2$ of $H$.
On the other hand, if agent 1 lies and reports the order $(A, C, D, H, \ldots)$, then she gets the improved allocation of
$1/4$ each of $A, C, D$ and $H$,
\end{example}

\ljs{Revised. I think this is closer to what you described on the phone:}
}

\begin{example} \textbf{Mechanism 2:} Let
$m=n=8$; name the goods $A, \ldots, H$. We specify only the essential components
of the preference orders.
The preference order of agent $1$ is alphabetical, $(A, \ldots, H)$. Agents $2,3,4$ have the preference order $(A,G,H,F,\ldots)$. Agents $5,6,7$ have the preference order $(B,C,E,F,\ldots)$. Agent $8$ has the preference order $(B,D,\ldots)$. If all agents 
report their preferences truthfully, agent 1 gets the allocation $(A:1/4, C: 1/4, D: 1/4, F: 1/4)$; if agent $1$ lies and reports the order $(A,C,E,D,\ldots)$ she gets the allocation $(A:1/4, C: 1/4, D: 1/4, E: 1/4)$, which she prefers.
\end{example}

\section{Discussion}
Our main open problem is the one mentioned in the
Introduction, i.e., achieving approximate versions of the
properties of the SG mechanism but when agents' preferences
are representable by utility functions.


\suppress{
A natural question, especially in view of the mechanisms provided in Section \ref{counter}, is whether
SG is the {\em unique} mechanism that is efficient, incentive compatible and envy-free. A related result, of
Bogomolnaia and Heo \cite{BH12}, is that efficiency
(under the sd relation), envy-freeness, and a
property they call bounded invariance characterize the PS mechanism. 
A mechanism is
said to have the \textit{bounded invariance property} if for
any agent $i$ and any good $j$, changing $i$'s preference
order for goods she likes less than $j$ does not change the
amount (equivalently, probability) of good $j$ each agent gets.
}

\suppress{
This leads to the question of appropriately
characterizing the SG mechanism. Towards this end we ask if
efficiency, under the more stringent lexicographic relation,
and envy-freeness suffice. Clearly, a first step would be to
characterize the PS mechanism in this manner. }

Another natural open question concerns the existence of mechanisms
to produce lexicographically most equitable allocations,
having favorable algorithmic and game-theoretic properties
(esp., incentive compatibility). The SG mechanism is not very equitable: see Appendix~\ref{sec.equitable}. 


\suppress{
As mentioned in the Introduction,
Katta and Sethuraman~\cite{KS06} generalize the setting of
Bogomolnaia and Moulin to the ``full domain'', i.e., agents
may be indifferent between pairs of goods, and they give a
randomized mechanism that is a generalization of PS and achieves
the same game-theoretic properties as PS. We ask the analogous
question for the generalization of our setting, i.e., each agent
partitions the goods by equality and defines a total order
on the equivalence classes of her partition (the agent is
equally happy with any good received from an equivalence
class). Preferences are again defined lexicographically over
classes, i.e., by considering the total amount of goods
received in each class. Is there a generalization of our
mechanism to this setting?
}

Finally, one expects that allocation quality will increase with the
diversity of agent preferences. A natural waystep to consider is the
very diverse setting in which agent preferences are independent
uniformly random permutations of $1, \ldots, m$. This suggests many 
interesting questions, e.g., what is the distribution of
$\beta_k$ as in Eqn.~\ref{beta-eqn},
for the allocation $a$ given by the SG mechanism; and,
what is the distribution of the maximized value of $t$ as in LP
(\ref{lp.equitable}), both for various values of $k$. Regarding
the latter question, for $n=m$ and all $r_i=q_j=1$, there is
a correspondence in the case $k=1$ with the collision
statistics of random pointers, and so it is known that $t
\to (\log \log n)/(\log n)$; for larger $k$ there is a rough
correspondence with the ``power of two choices''
literature~\cite{AzarBKU99,MRSchapter}, suggesting likely
asymptotics of $(\log k)/(\log \log n)$ for fixed $k$,
although the correspondence between the problems is not
close enough for us to state this with certainty.

\section{Acknowledgments}
We are indebted to Herv\'{e} Moulin for generously sharing his deep
understanding of this research domain. Thanks also to
Jeremy Hurwitz for stimulating discussions and to Amin
Saberi for pointing us to useful references.

Schulman was supported in part by NSF Grants 1038578 and 1319745. Part of this work was done while visiting the Simons Institute for the
Theory of Computing at UC Berkeley. 

Vazirani was supported in part by NSF Grants CCF-0914732 and CCF-1216019, and 
a Google Research Grant. He would like to thank the Social and Informational Sciences Laboratory at Caltech for their hospitality; part of this work was done while he was Distinguished SISL Visitor during 2011-12. He would also like to thank the Guggenheim Foundation for a Fellowship during 2011-12.

\bibliographystyle{plain}
\bibliography{refs}

\appendix
\section{Equitability of Allocations}
\label{sec.equitable}

It is interesting to consider whether an allocation mechanism is
equitable---minimizing, in some measurable sense, the
disparity in the welfare of the players. In spite of being
deterministic and treating all agents symmetrically, the SG
mechanism is not particularly equitable, except as regards how much each agent receives of his most-preferred good.
We provide an example showing that even the allocations of each
agent's two most preferred goods may be quite
inequitable. 

On the other hand we describe a
time-efficient algorithm that, for any given $1 \leq k \leq
m$, equitably allocates the top $k$ goods for each agent. We
further define the notion of a \textit{lexicographically
most equitable} allocation and give a time-efficient
algorithm to find one. In this section we consider only the non-Leontief case.

Recall that for an allocation $a$ let $\bar{a}$ denotes the
\textit{relative allocation}
$\bar{a}_{ij} = a_{ij}/r_i$. For any $k, \ 1 \leq k \leq m$,
say that an allocation is \textit{equitable w.r.t.\ agents'
  top $k$ choices} if it belongs to
\[ \arg \! \max_{a}  \min_i  {(\bar{a}_{i \pi_i(1)} + \ldots
  +  \bar{a}_{i \pi_i(k)})}   ,\]
where the max is over all allocations $a$.

It is easy to see that the allocation produced by the SG
mechanism is equitable for $k = 1$. However, as the
following example illustrates, it is not equitable for $k=2$, or
larger values of $k$.

\begin{example} \label{not-equitable} Let $n = 2$, $m = 3$, $r_1=r_2=1$,
$q_1=1/2$, $q_2=5/6$, and $q_3=2/3$.  Let the preference
list of the first agent be $(1, 2, 3)$ and that of the
second agent $(2, 3, 1)$. Then the SG mechanism gives sorted
allocations of $(1/2, 1/6, 1/3)$ and $(2/3, 1/3, 0)$
respectively to the agents, so each receives $2/3$ of his
total allocation from his top two choices. On the other
hand, the sorted allocations $(1/2, 1/2, 0)$, $(1/3, 2/3,
0)$ are also feasible, and in this case each agent receives
his entire allocation from his top two
choices. \end{example}

Next, we show that there is a polynomial-time algorithm
which, given $k$, $(r_i)$, $(q_j)$ and the list of agent
preferences, obtains an allocation that is equitable
w.r.t.\ agents' top $k$ choices. In fact this allocation
$x=(x_{ij})$ is the solution to the linear program given
below, together with $t$, the minimum over agents of the
relative allocation from the agent's top $k$ goods.

\begin{align}
\label{lp.equitable}
\textrm{Maximize \ } &  t  \\ \smallskip
\textrm{Such that \;} & \forall i: \;   t \leq
\frac{1}{r_i} \left(\sum_{\ell =1}^k {x_{i \pi_i(\ell)} }
\right)  \nonumber  \\
 & \forall i: \; \sum_{j=1}^m {x_{ij} } =  r_i  \nonumber  \\
          & \forall j:  \; \sum_{i=1}^n x_{ij} \leq  q_j  \nonumber \\
          & \forall i \ \forall j: \; x_{ij} \geq 0 \nonumber
\end{align}

Finally, let us define the notion of the
\textit{lexicographically most equitable allocation}, which
intuitively is an allocation that simultaneously optimizes
for each $k$, to the extent possible. For any allocation
$a$, and each $k, \ 1 \leq k \leq m$, define
\be \beta_k = \min_i  {(\bar{a}_{i \pi_i(1)} + \ldots +
  \bar{a}_{i \pi_i(k)})}   . \label{beta-eqn} \ee
Now, define a \textit{lexicographically most equitable
  allocation} to be one that lexicographically maximizes
$(\beta_1, \ldots, \beta_m)$.

We now give a polynomial-time algorithm to find a
lexicographically most equitable allocation---it involves
solving $m$ LPs derived from LP (\ref{lp.equitable}). The
first LP simply computes $\beta_1$ by solving LP
(\ref{lp.equitable}) for $k = 1$. Next, for each $k, \ 2
\leq k \leq m$, add the following constraints to LP
(\ref{lp.equitable}) and solve it to determine $\beta_{k}$:
\[  \forall i, \forall 1 \leq h \leq k-1: \    \frac{1}{r_i}
\left(\sum_{\ell =1}^h {x_{i \pi_i(\ell)} } \right)  \geq
\beta_{h} .\]
Clearly, the last LP will yield a most equitable allocation.

\begin{example} For the agents in
  Example~\ref{not-equitable}, the
lexicographically most equitable allocation is (given as a
sorted allocation): $(1/2, 1/3, 1/6)$ for agent $1$ and
$(1/2, 1/2, 0)$ for agent $2$. This is different from both
the SG allocation and the allocation that is equitable w.r.t.\ agents' top $2$ choices. \end{example}

Although equitability would seem to be a desirable property, it
must be noted that an equitable allocation need not be even
Pareto efficient:

\begin{example} Let $n=3$, $m=4$, $r_1=r_2=r_3=2$,
  $q_1=q_2=1$, $q_3=q_4=2$. Let the preference lists be
  $\pi_1=(1,2,3,4)$, $\pi_2=(3,4,1,2)$,
  $\pi_3=(4,3,1,2)$. For any $0 \leq x \leq 1$, the
  following allocation is lexicographically most equitable,
  and even stronger, it simultaneously optimizes all
  $\beta_k$ in Eqn.~\ref{beta-eqn}:
\[ a_1=(1,1,0,0), a_2=(0,0,2-x,x), a_3=(0,0,x,2-x). \]
Yet this allocation is Pareto efficient only in the
single case $x=0$.
\end{example}
\end{document}